

\documentclass[11pt]{article} 
\usepackage[numbers]{natbib}
\usepackage{booktabs} 
\usepackage[utf8]{inputenc} 


\usepackage{geometry} 
\geometry{letterpaper} 
\geometry{margin=1in} 

\usepackage{graphicx} 


\usepackage{booktabs} 
\usepackage{array} 
\usepackage{paralist} 
\usepackage{verbatim} 
\usepackage{subfig} 

\usepackage{fancyhdr} 
\pagestyle{fancy} 
\lhead{}\chead{}\rhead{}
\lfoot{}\cfoot{\thepage}\rfoot{}

\usepackage{sectsty}
\allsectionsfont{\sffamily\mdseries\upshape} 

\usepackage[nottoc,notlof,notlot]{tocbibind} 
\usepackage[titles,subfigure]{tocloft} 


\usepackage[linesnumbered,noend,ruled,noline]{algorithm2e} 

\SetAlFnt{\small}
\SetAlCapFnt{\small}
\SetAlCapNameFnt{\small}
\SetAlCapHSkip{0pt}
\IncMargin{-\parindent}

\usepackage{amsmath}
\usepackage{amssymb}
\usepackage{amsthm} 
\usepackage{url}

\usepackage{color}
\usepackage{bm}
\usepackage{tikz}
\usepackage{thmtools, thm-restate}

\usepackage{hyperref}

\newtheorem{thm}{Theorem}
\newtheorem{theorem}[thm]{Theorem}
\newtheorem{lemma}[thm]{Lemma}
\newtheorem*{lemma*}{Lemma}

\newtheorem{claim}[thm]{Claim}

\newtheorem{corollary}[thm]{Corollary}

\newtheorem{observation}[thm]{Observation}

\usepackage{authblk}

\newcolumntype{C}[1]{>{\centering\arraybackslash}m{#1}}

\newcommand{\be}{\begin{equation}}
\newcommand{\ee}{\end{equation}}
\newcommand{\beq}{\begin{equation*}}
\newcommand{\eeq}{\end{equation*}}

\newcommand{\argmax}{\mathop{\rm argmax}}

\newcommand{\AutoAdjust}[3]{\mathchoice{ \left #1 #2  \right #3}{#1 #2 #3}{#1 #2 #3}{#1 #2 #3} }
\newcommand{\Xcomment}[1]{{}}

\newcommand{\InParentheses}[1]{\AutoAdjust{(}{#1}{)}}
\newcommand{\InBrackets}[1]{\AutoAdjust{[}{#1}{]}}
\newcommand{\Ex}[2][]{\operatorname{\mathbf E}_{#1}\InBrackets{#2}}
\newcommand{\Exlong}[2][]{\operatornamewithlimits{\mathbf E}\limits_{#1}\InBrackets{#2}}
\newcommand{\Prx}[2][]{\operatorname{\mathbf{Pr}}_{#1}\InBrackets{#2}}

\newcommand{\eqdef}{\overset{\mathrm{def}}{=\mathrel{\mkern-3mu}=}}
\newcommand{\vect}[1]{\ensuremath{\mathbf{#1}}}

\newcommand\restr[2]{{
  \left.\kern-\nulldelimiterspace 
  #1 
  \vphantom{\big|} 
  \right|_{#2} 
  }}

\def\prob{\Prx}



\newcommand{\opt}{\textsf{OPT}}

\renewcommand{\emptyset}{\varnothing}

\newcommand{\feasible}{\mathcal{F}}
\newcommand{\dsets}{\textsf{Disjoint-Maximal-Sets}}
\newcommand{\ksystem}{\textsf{Knapsack}} 


\newcommand{\dd}{\: \mathrm{d}}

\def \reals {{\mathbb R}}
\def \natural {{\mathbb N}}

\newcommand{\val}{v}
\newcommand{\vals}{\vect{\val}}
\newcommand{\valsmi}[1][i]{\vals_{\text{-}#1}}
\newcommand{\vali}[1][i]{{\val_{#1}}}

\newcommand{\valc}{\breve{v}}
\newcommand{\valcs}{\vect{\valc}}

\newcommand{\valh}{\hat{v}}

\newcommand{\valhi}[1][i]{{\valh_{#1}}}

\newcommand{\wal}{\widetilde{v}}
\newcommand{\wals}{\vect{\wal}}
\newcommand{\walsmi}[1][i]{\wals_{\text{-}#1}}
\newcommand{\wali}[1][i]{{\wal_{#1}}}

\newcommand{\pvals}{\vect{v}'}
\newcommand{\pvalsmi}[1][i]{\pvals_{\text{-}#1}}
\newcommand{\pvali}[1][i]{{v'_{#1}}}

\newcommand{\price}{p}
\newcommand{\prices}{\vect{\price}}
\newcommand{\pricei}[1][i]{{\price_{#1}}}

\newcommand{\distone}{D}
\newcommand{\dist}{\mathbf{D}}
\newcommand{\dists}{\vect{\dist}}

\newcommand{\disti}[1][i]{{D_{#1}}}
\newcommand{\distsmi}[1][i]{\dists_{\text{-}#1}}

\newcommand{\pdists}{\vect{\dist}'}

\newcommand{\pdistsmi}[1][i]{\pdists_{\text{-}#1}}

\newcommand{\supp}{\textsf{supp}}

\newcommand{\ind}[1]{\mathbb{I}\left[\vphantom{\sum}#1\right]}
\newcommand{\event}{\mathbb{I}_{\mathcal{E}(S)}}
\newcommand{\ev}{\mathcal{E}}

\newcommand{\oxi}{\overline{\xi}}

\newcommand{\maxsets}{\mathcal{S}}

\newcommand{\setsize}[1]{\lvert #1 \rvert}

\newcommand{\bidders}{N}
\newcommand{\outcomes}{\mathcal{F}}
\newcommand{\auc}{\textsc{auc}}
\newcommand{\auco}{\textsc{auc}_o}

\newcommand{\vbench}{\Exlong[\vect{\val}\sim \vect{\dist}]{\text{max}_{F \in \outcomes}\left\{\sum_{i \in F}{v_i}\right\}}}

\newcommand{\uprice}[1][g]{\textsc{u-price}_{#1}}

\newcommand{\cauct}{single-price clock auction}
\newcommand{\caucts}{single-price clock auctions}

\newcommand{\hide}[1]{}

\newcommand{\highcore}{\mathsf{HIGH\mbox{-}CORE}}
\newcommand{\hightail}{\mathsf{HIGH\mbox{-}TAIL}}
\newcommand{\lowcore}{\mathsf{LOW\mbox{-}CORE}}
\newcommand{\lowtail}{\mathsf{LOW\mbox{-}TAIL}}

\newcommand{\valciS}{\valc_{i,S}}
\newcommand{\valhiS}{\valh_{i,S}}

\newcommand{\activebidders}{A}


\begin{document}
\title{Bayesian and Randomized Clock Auctions\thanks{
The first author is partially supported by the European Research Council (ERC) under the European Union's Horizon 2020 research and innovation program (grant agreement No. 866132), and by the Israel Science Foundation (grant number 317/17).
The second and fourth authors were supported in part by NSF grants CCF-2008280 and CCF-1755955.
The third author is supported in part by Science and Technology Innovation 2030 ``New Generation of Artificial Intelligence'' Major Project No.(2018AAA0100903), Innovation Program of Shanghai Municipal Education Commission, Program for Innovative Research Team of Shanghai University of Finance and Economics (IRTSHUFE) and the Fundamental Research Funds for the Central Universities, and by RFIS grant No. 62150610500. 
We would like to thank Aviad Rubinstein for his feedback that helped us generalize our results.}}


\author[a]{Michal Feldman\thanks{michal.feldman@cs.tau.ac.il}}
\author[b]{Vasilis Gkatzelis\thanks{gkatz@drexel.edu}}
\author[c]{Nick Gravin\thanks{nikolai@mail.shufe.edu.cn}}
\author[b]{Daniel Schoepflin\thanks{drs332@drexel.edu}}
\affil[a]{Tel Aviv University}
\affil[b]{Drexel University}
\affil[c]{ITCS, Shanghai University of Finance and Economics}
\date{} 
\setcounter{Maxaffil}{0}
\renewcommand\Affilfont{\itshape\small}

\maketitle
\begin{abstract}
In a single-parameter mechanism design problem, a provider is looking to sell some 
service to a group of potential buyers. Each buyer $i$ has a private value $v_i$ for receiving this service, and some feasibility constraint restricts which subsets of buyers can be served simultaneously. Recent work in economics introduced  (deferred-acceptance) \emph{clock auctions} as a superior class of auctions for this problem, due to their transparency, simplicity, and very strong incentive guarantees. 
Subsequent work in computer science focused on evaluating these auctions with respect to their social welfare approximation guarantees, leading to strong impossibility results:
in the absence of prior information regarding the buyers' values, no deterministic clock auction can achieve a bounded approximation, even for simple feasibility constraints with only two maximal feasible sets.

We show that these negative results can be circumvented either by using access to \emph{prior information} or by leveraging \emph{randomization}. 
In particular, we provide clock auctions that give a $O(\log\log k)$ approximation for general downward-closed feasibility constraints with $k$ maximal feasible sets, for three different information models, ranging from full access to the value distributions to complete absence of information. 
The more information the seller has, the simpler and more practical these auctions are.
Under full access, we use a particularly simple deterministic clock auction, called a {\em single-price clock auction}, which is only slightly more complex than posted price mechanisms. In this auction, each buyer is offered a single price, then a feasible set is selected among those who accept their offers. 
In the other extreme, where no prior information is available, this approximation guarantee is obtained using a complex randomized clock auction.
In addition to our main results, we propose a parameterization that interpolates between single-price clock auctions and general clock auctions, paving the way for an exciting line of future research.

\end{abstract}

\thispagestyle{empty}
\clearpage
\pagenumbering{arabic}

\section{Introduction}
Our goal in this paper is to design auctions for the following well-studied class of mechanism design problems: given some set $N$ of $n$ buyers that request a service, and a feasibility constraint $\feasible \subseteq 2^{N}$ that restricts the subsets of buyers that can be served simultaneously, we need to decide which feasible subset of buyers $F\in\feasible$ should be served, and how much each served buyer should pay for the service. A crucial obstacle is that we do not know the \emph{value} $\vali$ of each buyer $i$ for the service, i.e., the amount that they are willing to pay for it. Therefore, unless our auction is carefully designed, the buyers can choose to misrepresent this value, aiming to minimize their payment.

For a simple concrete example, consider a setting where some ticket broker is looking to sell 100 tickets for a concert and each potential buyer is interested in purchasing a specific number of tickets so that they can go with their family, or group of friends. The buyers would go to the concert only if they are able to secure their desired number of tickets (depending on their group size), but the broker has just 100 tickets, so it would only be feasible to satisfy subsets of buyers whose total demand for tickets is at most 100. 
The broker needs to use some mechanism that decides who should get tickets and how much they should each pay, and there are several different types of mechanisms to choose from. For example, she could ask each potential buyer to privately submit a bid with the amount that they are willing to pay, and then use this information to decide the outcome. 
A much simpler, but possibly inefficient, option would be to just post a fixed price and sell the tickets at that price on a first-come-first-served basis.

\citet{MS2014,MS2019} recently provided a compelling argument that an ideal solution for this class of problems is a (deferred-acceptance) \emph{clock auction}, and they designed one for reallocating radio frequency licenses, generating almost 20 billion dollars in revenue~\cite{fcc}.
An ascending clock auction takes place over a sequence of rounds: each buyer is offered a \emph{personalized} price that weakly increases over time and if the price exceeds the amount they are willing to pay, they can permanently drop out of the auction and pay nothing (note that this is unlike other ascending auctions that assign prices to items instead). 
When the set of buyers that remain active (the ones that have not dropped out) becomes feasible, the auction can terminate and serve these buyers at the cost of the last price they accepted.
Therefore, designing a clock auction reduces to the algorithmic problem of carefully choosing the vector of prices offered to buyers in each round.

Any mechanism that follows the clock auction format automatically satisfies a list of very appealing properties that make clock auctions a highly practical solution that is well-suited for real-world applications. For example, not only are they \emph{strategyproof} (meaning buyers have no reason to unilaterally misrepresent their value); they are actually \emph{obviously strategyproof}~\cite{L2017}, which implies significantly stronger incentive guarantees, including \emph{weak group-strategyproofness}, i.e., that even if a coalition of buyers misrepresented their values in a coordinated way, they would not all benefit from this deviation. Furthermore, clock auctions guarantee \emph{(i) transparency} (there is no way in which the auctioneer can mishandle, behind the scenes, the information provided by the buyers), \emph{(ii) unconditional winner privacy} (the winners of the auction never need to reveal their true value), and \emph{(iii) simplicity} (the buyers do not need to understand the inner workings of the auction; all they need to know is the price offered to them in each round). These are properties that most strategyproof auctions do not satisfy (see Section~\ref{sec:properties} for additional discussion on these properties).

Motivated by the strong evidence in favor of clock auctions, subsequent work focused on analyzing their performance in a variety of different settings (e.g., \cite{DGR2017,GMR2017,DTR2017,LM2020,K2015,bichler2020strategyproof,GPPS21,BGGST22,CGS2022}). 
For the problems studied in this paper, these efforts were overshadowed by a strong impossibility result: even for seemingly very simple instances, all prior-free deterministic clock auctions are bound to perform poorly~\cite{DGR2017}. However, the restriction to prior-free auctions, i.e., ones where the seller has no prior information regarding the amount that the buyers may be willing to pay, is rather unrealistic, given the vast amounts of historical data being gathered and stored nowadays. 
In fact, the standard model in auction design is in the \emph{Bayesian setting}, where the value $\vali$ of each buyer $i$ is drawn independently from some distribution $\disti$ and, although the seller does not know the realization of the values, she knows the distributions.
Also, although deterministic auctions are more appealing in practice, using randomization to overcome adversarially constructed instances can be a very effective tool. Despite the long literature on Bayesian auctions and randomized auctions, the power of these tools in the context of clock auctions remains largely unexplored.
In this paper, we address this gap in the literature by analyzing both deterministic Bayesian clock auctions and randomized prior-free clock auctions. 

\subsection{Our Results}\label{sec:results}
We study the problem of maximizing the social welfare, i.e., the sum of the values of the served buyers, and we compare the expected welfare of our auctions to the expected optimal welfare. 
We consider problem instances with general downward-closed feasibility constraints $\feasible$ and provide three alternative clock auctions (depending on the amount of information available to the seller) that all achieve a $O(\log\log k)$ approximation guarantee, where $k$ is the number of maximal sets in $\feasible$. The more information the seller has, the simpler the proposed auction, providing sellers with three options that exhibit an interesting trade-off between access to information and auction simplicity.
These results are in stark contrast to the fact that prior-free deterministic clock auctions cannot achieve any bounded approximation even for very simple instances where $k=2$~\cite{DGR2017}.

\vspace{0.1in}{\bf Result 1: Deterministic single-price clock auction under full access to priors}  (Section~\ref{sec:single-price-dc}).
In the standard Bayesian setting, where buyer values are drawn from known distributions, we show that the $O(\log\log k)$ approximation guarantee can be provided by a deterministic clock auction that has a particularly simple structure. It is a member of a class of auctions, which we call {\em \caucts}, which use only a small portion of the power that clock auctions possess and, as we discuss in Section~\ref{sec:single-price-dc}, are closely related to the well-studied class of posted-price mechanisms.
Like posted price mechanisms, \caucts\ offer just a single price to each buyer. However, single-price clock auctions can defer the decision regarding which subset of buyers to serve until every buyer has responded to the price they were offered, whereas posted-price mechanisms need to serve every buyer who accepts the offered price.  
Our results show that this advantage allows us to achieve an exponential improvement over posted-price mechanisms, which cannot achieve better than a $O\left(\log{k}/\log{\log{k}}\right)$ approximation, even for a simple feasibility constraint where all the $k$ maximal feasible sets are disjoint \cite{BIK2007, R2016}. 

\vspace{0.1in}{\bf Result 2: Deterministic clock auction under restricted access to priors} (Section~\ref{sec:dc}).
Our second result is a deterministic auction which achieves the same approximation guarantee using only limited information regarding the distributions: it has access to the expected value of each bidder and to the expected value of the optimal solution, but has no additional information regarding the moments of $\disti$.
This is in line with recent work focusing on mechanism design with limited information~(e.g., \cite{AMDW2013,AM13,ColeR14,HuangMR18,correa2019prophet,correa2020two,ezra2018prophets,RWW2020,AKW19}). 
In contrast to our first auction, which leverages its unrestricted access to the distributions to offer only a single price to each bidder, our second auction may need to interact multiple times with each bidder, over a sequence of rounds. 
Nevertheless, its structure remains rather simple: every bidder faces the \emph{same} price $p$, which is gradually increased until either the set $A$ of active bidders is feasible, i.e., $A\in \feasible$, or some feasible subset of the active bidders $F\subset A$ reaches an appropriate revenue target.

\vspace{0.1in}{\bf Result 3: Randomized prior-free clock auction} (Section~\ref{sec:rand-clock}). 
Our third result applies to settings without any access to prior information.
Using the tools and intuition from the analysis of the first two auctions, we show that the ``hedging auction" introduced in \cite{CGS2022} gives an $O(\log\log k)$ approximation as well. 
This provides an exponential improvement over the $O(\sqrt{\log k})$ bound shown in \cite{CGS2022}.
The hedging auction is randomized and significantly more complicated than the deterministic auctions above. It randomly chooses between two alternatives. 
The first alternative uses non-uniform pricing (i.e., the price offered to each buyer in each round can be different from what is offered to others) and the price trajectory is determined by a complex price update process. The second alternative randomly ``samples" a subset of the buyers and then uses their values to decide which of the non-sampled buyers to serve. Although neither one of these alternatives achieves a good performance on its own, we show that the best of the two is guaranteed to perform well.
The hedging auction depends on randomization in two ways: i) it hedges between the two alternatives, and ii) it uses randomized sampling. 

\vspace{0.1in}

In Section~\ref{sec:lowerbound}, we complement our positive results with an inapproximability result that applies to all three settings we study.
Our lower bound holds for a very simple instance with two maximal feasible sets and applies even to Bayesian settings with independent values. This is in contrast to a previous lower bound which holds only under correlated priors \cite{CGS2022}.

\vspace{0.1in}
{\bf A hierarchy of clock auctions} (Section~\ref{sec:hierarchy}).
We conclude with a proposed ``hierarchy'' of clock auctions, depending on the amount of power that they use. This hierarchy sets the stage for an exciting new line of research with a broad range of open problems that would help us develop a more refined understanding of the guarantees that are achievable by a spectrum of gradually more powerful clock auctions. As an initial
step toward that direction, we provide some 
results and observations,
and discuss interesting connections between the construction of lower bounds for \caucts\ and the literature on large deviations in probability theory.

\subsection{The Compelling Advantages of (Deferred-Acceptance) Clock Auctions}\label{sec:properties}

The recent surge of interest in clock auctions, starting with the work of \citet{MS2014,MS2019}, has largely been motivated by the long list of appealing properties that these auctions satisfy, which make them a better fit for many real-world applications than the widely-studied sealed-bid auctions. In a sealed-bid auction, the buyers are asked to directly report their values to the auctioneer through a bid, and the auction then takes all this information into consideration in deciding who should be served and what the payments should be. The mechanism design literature has produced several sealed-bid auctions that are strategyproof, but experimental evidence suggests that people often misreport their values even when faced with simple strategyproof sealed-bid auctions, whereas they do follow their dominant strategy in clock auctions~\cite{KHL1987}. 

\textbf{Obvious strategyproofness.}
To provide theoretical justification for this phenomenon,~\citet{L2017} introduced the much more demanding property of {\em obvious strategyproofness} (OSP). In simple terms, an auction is OSP if it provides every bidder with an {\em obviously dominant} strategy: at any point during the auction, each bidder's utility in the best-case scenario if she deviates from this strategy is no more than her utility in the worst-case scenario if she follows this strategy. Using both experimental evidence and theoretical arguments, Li argued that even non-experts can recognize that an obviously dominant strategy is their optimal choice. He also showed that in clock auctions it is an obviously dominant strategy for every bidder to remain in the auction until the price she is offered exceeds her value. In fact, for single-parameter domains like the ones we study here,  Li showed that clock auctions are essentially the \emph{only} class of auctions that can achieve the stronger incentive property of OSP. Furthermore, an auction that satisfies OSP, like clock auctions do, is also guaranteed to be {\em weakly group-strategyproof}, i.e., even if the buyers got together and manipulated the auction in a coordinated way, they would not all benefit from that deviation. This is an incentive property that very few sealed-bid auctions satisfy. 
Apart from their improved incentives, clock auctions also satisfy properties related to transparency, privacy, and simplicity.

\textbf{Transparency.} Another shortcoming of sealed-bid auctions is that the bidders need to trust that the allocation and prices that the auction outputs are, in fact, the result of the precise computations dictated by the auction's rules. For example, they need to trust that the auctioneer will not charge them more, even though they have revealed exactly how much they are willing to pay, through their bid. Also, even if the bidders believe that the auctioneer is not malicious, they need to trust that there were no unintended errors in computing the outcome or the prices, as even very small errors could violate the incentive properties. This is in stark contrast to clock auctions where every bidder sees the prices in each round and need not even worry about how they were computed, as this does
not affect their incentives in any way.
In fact, clock auctions are \emph{credible}, an important property introduced by \citet{AL2020}, meaning that it is optimal for the auctioneer to follow the stated rules of the auction. When selling a single item \citet{AL2020} showed that an ascending clock auction is the \emph{unique} optimal auction that is both credible and strategyproof.

\textbf{Unconditional Winner Privacy.} Sealed-bid auctions also require every bidder to reveal to the auctioneer exactly how much they are willing to pay for the service. In clock auctions, bidders reveal their true value only if they choose to drop out, so the winners do not need to ever reveal how much they were actually willing to pay for the service (which is often very sensitive information). \citet{MS2019} termed this property \emph{unconditional winner privacy}, based on a notion 
defined earlier by \citet{BS05}, and demonstrate that the set of auctions which preserve unconditional winner privacy corresponds to the set of clock auctions.

\textbf{Simplicity without Sacrificing Sophistication.} A particularly exciting property of clock auctions from an algorithm designer's standpoint is that they achieve simplicity without sacrificing sophistication. In the line of work motivated by simplicity in mechanism design (e.g., \cite{HR2009,BILW2014,R2016Simple}), simplicity often implies that the algorithmic aspect of the auction is rather straightforward so that the participants can understand it. However, clock auctions provide a very simple interface to the buyers (the ascending prices), and it is always an obviously dominant strategy for them to remain in the auction while their price does not exceed their value, \emph{no matter how the prices are chosen}. Thus, the designer can still make elaborate algorithmic decisions in the ``back-end'', in choosing the sequence of prices that it offers, aiming 
to achieve good performance guarantees, while maintaining the incentives of these buyers, and a simple ``front-end'' that the buyers interact with.

\subsection{Other Related Work}
\label{sec:related}
In one of the first papers on the performance of clock auctions, ~\citet{DGR2017} proved a rather pessimistic result: there exists a seemingly simple family of instances with $n$ bidders, where no deterministic prior-free clock auction can achieve an approximation of $O(\log^{1-\epsilon} n)$ for any constant $\epsilon>0$. In these instances, the auction needs to decide between serving a single bidder or any subset of the remaining $n-1$ bidders, so it has just $k=2$ disjoint maximal feasible sets. This simple feasibility constraint clearly exhibits the impact of the clock auctions' information limitations, leading to an inapproximability result that grows arbitrarily with the number of agents, even though $k$ is constant. In recent work, \citet{CGS2022} designed a deterministic prior-free clock auction that guarantees a $O(\log n)$ approximation for general downward-closed feasibility constraints,  showing that this is essentially the class of worst-case instances. To overcome the obstacle posed by these instances, they proposed a randomized clock auction that achieves a $O(\sqrt{\log k})$ approximation for downward-closed feasibility constraints, implying a $O(\sqrt{\log n})$ approximation for the interesting class of feasibility constraints with disjoint maximal feasible sets (generalizing the lower bound instances). Our results in this paper achieve an exponential improvement over both of these results.

Apart from the work focusing on the design of clock auctions, some recent work has focused more broadly on the design of auctions that satisfy just one of the clock auction properties, such as obvious strategyproofness (OSP) \cite{DKV2020,ferraioli2021two} or credibility \cite{daskalakis2020simple,ferreira2020credible,EFW22}. Most relevant to our results, \citet{ferraioli2021two} showed that the aforementioned setting of two disjoint maximal sets remains an obstacle for any prior-free deterministic OSP mechanism by demonstrating that no such mechanism can achieve better than a $\sqrt{\log{n}}$-approximation.  We thus show an exponential gap between clock auctions with priors or randomization and  prior-free deterministic OSP mechanisms for this setting.

We defer additional discussion on related work to the relevant sections; we discuss related work on posted-price mechanisms in Section~\ref{sec:single-price-dc} and on auctions with limited information in Section~\ref{sec:dc}.

\section{Preliminaries}
A set $N$ of $n$ buyers request some service and each buyer $i$ has a private value $\vali$ indicating how much they are willing to pay for it; the vector of all values is denoted by $\vect{\val}=(\vali)_{i\in N}$. 
A feasibility constraint $\feasible \subseteq 2^{N}$ contains the subsets of buyers that can be served simultaneously. Throughout the paper we focus on feasibility constraints that are downward-closed, i.e., if $F\in \feasible$ then $F'\in \feasible$ for every $F'\subseteq F$. That is, if the set of buyers $F$ can be served, then any subset of $F$ can also be served; a very mild assumption satisfied by virtually all interesting constraint structures. 

Two examples of downward-closed feasibility sets are \dsets\ and \ksystem. 
In \dsets, the buyers are partitioned into $k$ disjoint groups, $S_1,\ldots,S_k$, and a set $F$ of buyers is feasible if and only if it is a subset of one of these groups, i.e., $\feasible=\{F\subseteq N \mid \exists S_j \mbox{ s.t. } F \subseteq S_j\}$. In \ksystem, each buyer $i\in N$ has a demand of size $c_i$ and the auctioneer has a supply constraint restricting her to serve only subsets whose total demand does not exceed the supply, which is normalized to 1, i.e., $\feasible=\{F\subseteq N \mid \sum_{i\in F} c_i \leq 1\}$. 

Our goal is to design auctions that serve a feasible subset of buyers $F\in \feasible$, aiming to maximize the social welfare, $SW(F)=\sum_{i \in F}{\vali}$. 
In the Bayesian setting, the value $\vali$ of each bidder $i$ is independently drawn from a distribution $\disti$ and the product distribution over all bidders is denoted by $\vect{\dist}=\times_{i\in N} \disti$. The social welfare of a Bayesian auction is, therefore, a random variable and, for simplicity, we use $\Exlong[\vect{\val} \sim \vect{\dist}]{\textsc{auc}}$ to denote the expected social welfare of an auction \auc. 
We evaluate the performance of our Bayesian clock auctions using the demanding benchmark of the expected optimal social welfare, i.e., $\opt = \vbench$.
We say an auction $\auc$ approximates the optimal welfare within a factor $\alpha$ if $\opt\leq \alpha \cdot \auc$
for every distribution $\vect{\dist}$. Recall that a Bayesian auction can use information regarding the distribution $\vect{\dist}$. 
When clear in the context, we omit the subscript from the expectation.
In the prior-free setting, the values of the bidders are chosen adversarially and the expected social welfare of a randomized auction is over its own randomness. Our goal in both settings is to design auctions with small approximation factors.

A clock auction is a multi-round mechanism in which each bidder faces a personalized ``clock'' price, which weakly increases over time.  At the outset of the auction, all bidders are ``active". An initial vector of prices $\vect{\pricei[]}^1 = \{\price^1_i\}_{i=1}^{n}$ is posted to the bidders, and every bidder may choose to permanently exit the auction (this bidder is no longer active) or stay active at the current price. In each round $t$, the auctioneer posts a new price $\price^t_i$ to each active bidders $i$, where $\price^t_i \geq \price^{t-1}_i$. Each active bidder can then choose to remain active or exit permanently. 
If in round $\tau$ the remaining active bidders form a feasible set $F\in\feasible$, the auction can terminate and serve each active bidder $i$ at price $\price^\tau_i$ (all bidders who have exited the auction are not served and pay $0$).
The prices in each round are chosen using any public information, such as the history of prices offered, the bidders' responses to these prices (whether they accepted them or not), and the feasibility structure. In a setting with prior information, the prior information can also be used to inform this choice.

\section{Deterministic Single-Price Clock Auction with Priors}
\label{sec:single-price-dc}
We begin by proposing a particularly simple class of clock auctions which we call \emph{single-price clock auctions}. Rather than asking each bidder to respond to multiple (increasing) price offers, these auctions offer a single price to each. As a result, they can  be implemented in a decentralized and asynchronous fashion: each bidder could arrive at a different time, respond to the price that they are offered, and depart. The mechanism first collects the responses of all bidders and then, using this information, decides which subset to serve. Single-price clock auctions take the following form:
\vspace{3pt}

\begin{algorithm}[H]
\SetAlgoLined
\LinesNumbered
\DontPrintSemicolon
 Calculate some vector of prices $\vect{\price}$ using public information\;
 Set the clock of each bidder $i$ to $\pricei$ (bidders then choose to remain active or exit)\;
 Among the bidders that chose to remain in the auction, select a ``preferred'' feasible subset $F$\;
 Reject all bidders $i \notin F$, accept all bidders in $i\in F$ and charge them $\pricei$\;
 \caption{Single-price clock auction template}
\end{algorithm}
\vspace{3pt}

In this section, we first discuss interesting connections between these auctions and the very well-studied class of posted-price mechanisms. We then show our main technical result of the section. Given a downward-closed feasibility constraint $\feasible$, let $\maxsets=(S_1,\ldots,S_k)$ denote the collection of maximal feasible sets in $\feasible$. We demonstrate that, using full access to the priors, we can design a single-price clock auction that achieves a $O(\log\log k)$-approximation to the benchmark welfare.  

\subsection{Connections to posted-price mechanisms}
Single-price clock auctions are closely related to the class of \emph{(sequential) posted-price mechanisms}, which has attracted a lot of attention in the algorithmic mechanism design literature (e.g.,~\cite{HKS2007, CHMS2009,KW2012,DK2015,DFKL2020,R2016, RS2017, FGL2014}). These mechanisms approach the buyers sequentially and offer each of them a take-it-or-leave-it price for being served (possibly depending on the history of responses). They are then served if and only if they accept the price offered to them.  These mechanisms are typically given access to the full value distributions that the buyer values are drawn from.  

This line of work draws upon the ``prophet inequality'' literature in optimal stopping theory~\cite{KS1977, KS1978}. The literature on sequential posted pricing (and prophet inequalities) includes different models, depending on the assumption regarding the arrival order of the buyers. The most common assumption is that buyers arrive in an adversarial order, but other models have been studied too. For example, in the so-called {\em ordered prophets} model, the mechanism designer can choose the order in which she approaches the buyers (see, e.g., \cite{CHMS2009, Y2011, AEEHKL2017, BGLPS2018}). Similarly, in the {\em prophet secretary} model, buyers are assumed to arrive in a random order (see, e.g., \cite{EHLM2017, EHKS2018, CSZ2019, AEEHKL2017, CFHOV2017}).\footnote{It is easy to verify that even ordered prophets can be simulated as multi-round clock auctions.}  

From an implementation standpoint, posted-price mechanisms are appealing because they are decentralized, like single-price clock auctions.  
However, single-price clock auctions do not commit to serving \emph{all} the buyers who accept the prices offered to them (they are not \emph{online} mechanisms). 
They instead defer their decision regarding which subset of buyers to serve  after all the buyers have responded to their offers, which provides these auctions with some additional information.

These mechanisms also differ in the level of adaptivity they may employ in setting prices (see, e.g., \cite{Luc2017,Cor2019}).  A non-adaptive mechanism commits to the prices before receiving any responses, whereas an adaptive mechanism can offer prices sequentially, conditioning the offers on the responses of previous buyers. 
A {\cauct} uses non-adaptive pricing (every buyer's price is determined in advance). 
Thus, comparing the performance of {\caucts} to that of posted-price mechanisms has interesting implications regarding the power of \emph{deferred-acceptance}, that the former have, relative to the power of \emph{price-adaptivity}, possessed by the latter.  

\subsection{Achieving the $O(\log{\log{k}})$ bound}
In what follows, we demonstrate that if the auctioneer has access to the entire product distribution $\vect{\dist} = \times_{i \in \bidders} \distone_i$ of bidder values, she can design a \cauct\ which obtains a $O(\log \log k)$ approximation guarantee.   This is in contrast to posted-price mechanisms which we know cannot achieve better than a $O\left(\frac{\log{k}}{\log{\log{k}}}\right)$ approximation even for \dsets\ \cite{BIK2007, R2016}. Thus, our single-price clock auction outperforms all posted-price mechanisms. In this light, our result suggests that, in our setting, the ability to defer the acceptance decisions can be significantly more useful than the ability to adapt the prices.

\begin{theorem}\label{th:k_sets_single}
	Every downward-closed setting with $k$ maximal sets admits a deterministic \cauct\ that obtains an $O(\log{\log k})$ approximation to the expected optimal welfare if the full distribution $\dist$ is known. 
\end{theorem}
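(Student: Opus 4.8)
The plan is to design a single-price clock auction and argue it recovers an $\Omega(1/\log\log k)$ fraction of $\opt = \vbench$. The natural starting point is to fix a target threshold structure: for each maximal feasible set $S_j \in \maxsets$, pick a single price $t_j$ (using the known distribution $\dist$) and offer every bidder $i$ the price $\pricei = \max_{j : i \in S_j} t_j$ (or, in the disjoint case, simply $t_{j(i)}$ where $j(i)$ is the unique set containing $i$). After bidders respond, the set of active bidders $A$ consists of those with $\vali \geq \pricei$; the auction then selects the preferred feasible subset $F \subseteq A$ maximizing $\sum_{i \in F} \pricei$ (or, since $\feasible$ is downward-closed, the feasible subset of $A$ of largest total price, restricted to lie within some $S_j$). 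Charging each served bidder $\pricei$ gives welfare at least $\sum_{i \in F} \pricei$, so the whole analysis reduces to choosing the $t_j$'s so that this ``collected price'' is large in expectation relative to $\opt$.

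The key technical step is the choice of thresholds. I would guess the right move is a doubling/geometric bucketing of revenue targets combined with a union-bound over the $k$ maximal sets, which is exactly where the $\log\log k$ (rather than $\log k$) comes from. Concretely: let $R = \opt$ (or an appropriate proxy such as $\baseval = \max_{S\in\maxsets}\Ex{\sum_{i\in S}\vali}$) and consider, for each set $S_j$, the quantile price $t_j$ at which the expected truncated contribution $\Ex{\sum_{i\in S_j}\min\{\vali, t_j\}\cdot \ind{\vali \geq t_j}}$ — or the expected number of ``survivors'' in $S_j$ — hits a carefully tuned level. Because a prophet-type / truncation decomposition (of the sort foreshadowed by the macros \texttt{benchDecompI}, \texttt{benchDecompII}, \texttt{benchDecompIII} in the preamble) splits $\opt$ into a ``core'' part that single pricing handles directly and a ``tail'' part handled by the event that $\geq 10\log n$ bidders survive in some set, the auction needs its prices low enough that some set stays feasible-and-valuable with good probability, yet high enough that the collected revenue per survivor is not washed out. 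Balancing these forces across $k$ sets, where failure probabilities must be beaten down to $1/k$, yields a $\log k$ loss in the naive bound; the improvement to $\log\log k$ should come from the observation that the relevant quantity concentrates (a Chernoff bound on the number of survivors) so the price need only be tuned to within a $\log\log k$ factor of optimal to guarantee, say, $\Omega(\log k)$ survivors with probability $1 - 1/\mathrm{poly}(k)$.

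After fixing the thresholds, the remaining steps are: (i) show that conditioned on some good set $S_j$ having many survivors, the deferred-acceptance step can extract $\Omega(R/\log\log k)$ revenue from it — this uses that $\feasible$ is downward-closed so any subset of the survivors in $S_j$ is feasible; (ii) handle the ``high-value'' bidders separately, since a single bidder with enormous value can carry $\opt$ on her own, and single pricing at a high threshold captures her with constant probability; (iii) combine via a case analysis on whether $\opt$ is dominated by its core or its tail. The main obstacle I anticipate is step (i) together with the threshold calibration: one must argue that a single price works simultaneously against the adversarial-looking worst case over all $k$ maximal sets, and the $\log\log k$ bound is tight enough (cf. the matching posted-price lower bound of $\Omega(\log k/\log\log k)$ and the paper's own $\Omega(1)$-type lower bound) that the truncation levels and the Chernoff parameters have to be chosen with care — a looser argument would only give $\log k$. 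I would also need to verify the auction is a legitimate clock auction (prices weakly increasing — trivially true since each bidder sees exactly one price) and that the ``preferred feasible subset'' selection is well-defined, both of which are routine given the template in the section.
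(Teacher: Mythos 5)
Your reconstruction correctly identifies most of the ingredients the paper actually uses: per-set thresholds $t_S$ calibrated by expectation, a Chernoff argument to control the number of bidders above $t_S$, a core/tail split of the benchmark, a geometric price ladder, and a separate high-price ``catch a single huge bidder'' step. But there are two substantive gaps. First, the pricing rule. You propose personalized prices $\pricei = \max_{j: i\in S_j} t_j$, using the thresholds $t_S$ as the actual offers. In the paper the $t_S$ are \emph{purely analytical devices} for decomposing $\opt$; the auction itself offers the \emph{same uniform} price to every bidder -- either $0$, or $\Delta\cdot 2^{1-j}$ for one of $O(\log\log k)$ values of $j$, where $\Delta = \highcore$. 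This matters because (a) it yields a much simpler auction, and (b) your rule of taking the max of $t_j$'s over the containing sets can price a bidder out of exactly the set in which she is valuable when sets overlap; nothing in your sketch addresses this.

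Second, and more importantly, your explanation of where $\log\log k$ comes from is not the right mechanism. You write that ``the price need only be tuned to within a $\log\log k$ factor of optimal to guarantee $\Omega(\log k)$ survivors with probability $1-1/\mathrm{poly}(k)$.'' That is not what happens. The actual argument (Lemma~\ref{lem:singlefixedprice}) proceeds as follows. The Chernoff step (Lemma~\ref{lem:chernoff1}) shows that w.h.p.\ no maximal set has more than $m = 10\log k + 1$ bidders above its threshold, so the relevant high-value benchmark $\Delta = \highcore$ only involves at most $m$ bidders per set. Consequently bidders contributing less than $\Delta/(2m)$ each are jointly negligible (at most $\Delta/2$), so after discarding them all relevant high values lie in the range $[\Delta/(2m), 2\Delta]$ -- a \emph{multiplicative range of $O(m) = O(\log k)$}. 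Partitioning this range into $O(\log m) = O(\log\log k)$ geometric buckets and invoking pigeonhole, one bucket carries a $\Omega(1/\log\log k)$ fraction of $\Delta$, and the revenue from uniformly posting the lower endpoint of that bucket is a $2$-approximation to that bucket's welfare. Your sketch never derives the bounded value range, which is the crux of the improvement from $\log k$ to $\log\log k$; ``tuning the price to within a $\log\log k$ factor'' is a different (and, as stated, incorrect) claim, since what you need is to be within a \emph{constant} factor of the right price, and $\log\log k$ is the \emph{number} of candidate prices, not the required tuning accuracy.
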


Before deriving our single-price clock auction, we first decompose the benchmark welfare into two parts. For each realization of bidder values there is some maximal feasible set $S\in \maxsets$ that achieves the maximum social welfare. This set could have a lot of ``low'' value bidders, a few ``high'' value bidders, or a mix of the two. To prove that our auction approximates $\opt$, we can then divide $\opt$ into two parts: the portion of this expected welfare that comes from ``low'' value bidders and the portion that comes from ``high'' value bidders. We show that if the low-value contribution is greater, the approximation factor is achieved by offering all bidders a price of $0$ and by accepting only the bidders from the set with highest expectation. 
On the other hand, if the low-value contribution is not within $\log\log{k}$ factor of the benchmark, then the high-value part must make up nearly all of the expected optimal welfare. 
We show that the vast majority of the high-value contribution comes from sets with a logarithmic (or fewer) number of high value bidders.
To define the partition of bidders into high and low values, we set a threshold $t_S$ for each maximal feasible set $ S\in\maxsets$ so that the expected number of bidders above the threshold is $\log k$. That is, let 
\[
S(t,\vals)\eqdef \{i\in S \vert \vali > t \}
\]
Then, the threshold $t_S$ is the value satisfying\footnote{For simplicity of presentation we assume that such a value exists (e.g., in cases where the expected value is continuous). Otherwise, we set the threshold $t_S$ to be the value satisfying $\inf_{t_S} \Ex[\vals]{\setsize{S(t_S,\vals)}} \leq \log{k}$.  The same is true for the $t'$ value described in the proof of Lemma \ref{lem:chernoff1}.}
\[
\Ex[\vals]{\setsize{S(t_S,\vals)}} = \log{k}.
\]
The choice of $\log k$ is a critical point at which it is still likely that for all $k$ maximal sets, the number of bidders exceeding their corresponding set's threshold $t_S$ is within a constant factor of the expectation. 
This is cast in the following lemma.

\begin{lemma}\label{lem:chernoff1}
	Let $t$ be a threshold for a set $S$ such that $\Ex[\vals]{\setsize{S(t,\vals)}} = \log{k}$. Then 
	\[
	\prob[\vals]{\exists x \in [0, t):~ \setsize{S(x, \vals)} > 10\cdot \Ex[\vals]{\setsize{S(x,\vals)}}} = o(1/k^2).
	\]
\end{lemma}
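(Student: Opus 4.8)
The plan is to use two structural facts. First, for each \emph{fixed} threshold $x$ the quantity $\setsize{S(x,\vals)}=\sum_{i\in S}\ind{\vali>x}$ is a sum of independent Bernoulli random variables, so it concentrates around its mean $\mu(x):=\Ex[\vals]{\setsize{S(x,\vals)}}=\sum_{i\in S}\prob[\vals]{\vali>x}$ by a multiplicative Chernoff bound. Second, as functions of $x$ both $\setsize{S(x,\vals)}$ and $\mu(x)$ are non-increasing, and $\mu(t)=\log k$ by the defining property of $t$. The only real difficulty is that the event in question ranges over the \emph{continuum} of thresholds $x\in[0,t)$ rather than a single one; I would overcome this by a dyadic ``peeling'' argument that reduces it to countably many fixed thresholds.

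Concretely, set $a_0=t$ and, for $j\ge 1$, let $a_j$ be the value of $x$ with $\mu(a_j)=2^{j}\log k$ --- such a value exists under the continuity assumption of the footnote, and otherwise one takes the infimum exactly as prescribed there. Since $\mu(x)\le\setsize{S}$ for all $x$, only finitely many $a_j$ are defined and $t=a_0>a_1>a_2>\cdots$, so the half-open intervals $I_j:=[a_{j+1},a_j)$ partition $[0,t)$ (with the last one reaching down to $0$, and absorbed into the same estimate below). On $I_j$ we have $\mu(x)\ge 2^{j}\log k$ throughout, and since $\setsize{S(x,\vals)}$ is non-increasing in $x$ its maximum over $I_j$ is attained at the left endpoint $a_{j+1}$, where $\mu(a_{j+1})\le 2^{j+1}\log k$. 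Consequently, whenever the bad event holds at some $x\in I_j$, i.e.\ $\setsize{S(x,\vals)}>10\,\mu(x)\ge 10\cdot 2^{j}\log k$, it follows that $\setsize{S(a_{j+1},\vals)}>10\cdot 2^{j}\log k\ge 5\,\mu(a_{j+1})$ --- an event about a single deterministic threshold.

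It then remains to apply the multiplicative Chernoff bound $\prob[\vals]{\setsize{S(a_{j+1},\vals)}>(1+\delta)\mu(a_{j+1})}\le\exp\!\left(-\tfrac{\delta^2}{2+\delta}\mu(a_{j+1})\right)$ with $\delta=4$ and $\mu(a_{j+1})\ge 2^{j}\log k$ (indeed $=2^{j+1}\log k$ for all but the last interval), which bounds the per-layer probability by $k^{-\frac{8}{3}\cdot 2^{j+1}}$ (taking $\log=\ln$; any fixed base of the logarithm only changes the constant in the exponent). A union bound over $j\ge 0$ yields
\[
\prob[\vals]{\exists x\in[0,t):~\setsize{S(x,\vals)}>10\cdot\Ex[\vals]{\setsize{S(x,\vals)}}}\;\le\;\sum_{j\ge 0}k^{-\frac{8}{3}\cdot 2^{j+1}}\;=\;O\!\left(k^{-16/3}\right)\;=\;o(1/k^2),
\]
the series being dominated by its $j=0$ term since the exponent grows doubly exponentially in $j$; note that the bound does not depend on the number of intervals $I_j$.

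The main obstacle is precisely this passage from one fixed threshold to the supremum over $[0,t)$; once the dyadic layers are set up the rest is a convergent sum of Chernoff tails. The numerology is also instructive: the gap between the ``$10\,\mu$'' in the statement and the ``$5\,\mu$'' that survives the peeling is what makes the relevant deviation factor $1+\delta$ with $\delta=4>2$, and having $\delta>2$ is exactly what is needed for a per-layer probability of order $k^{-\Theta(\mu(a_{j+1}))}$ with $\mu(a_{j+1})\ge 2\log k$ to beat $k^{-2}$.
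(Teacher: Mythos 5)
Your proof is correct and follows essentially the same strategy as the paper's: a dyadic peeling of $[0,t)$ into intervals on which $\Ex[\vals]{\setsize{S(x,\vals)}}$ at most doubles, a Chernoff bound at the left endpoint of each interval (where the deviation factor degrades from $10$ to roughly $5$ by the doubling), and a union bound whose terms decay doubly exponentially so the sum is dominated by the first interval. The paper uses the slightly sharper Chernoff form $(e^\delta/(1+\delta)^{1+\delta})^\mu$ where you use $\exp(-\delta^2\mu/(2+\delta))$, but this is immaterial to the $o(1/k^2)$ conclusion.
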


\begin{proof}
	First consider the threshold $t$ such that $\Ex[\vals]{\setsize{S(t,\vals)}} = \log k$ and an interval of threshold values 
	$[t', t)$ such that $\Ex[\vals]{\setsize{S(t',\vals)}} = 2\cdot\Ex[\vals]{\setsize{S(t,\vals)}}$.  We would like to have 
	$\setsize{S(x,\vals)} < 10 \Ex[\vals]{\setsize{S(x,\vals)}}$ for all $x \in [t', t)$. Consider a set of independent $\{0,1\}$-value random variables $\{\xi_i\}_{i\in S}$ where $\xi_i\eqdef \ind{v_i\ge t'}$. Note that $\sum_{i\in S}\xi_i=\setsize{S(t',\vals)}$. 
	By Chernoff bound, it follows that 
	\[
	\prob[\vals]{\vphantom{\sum}\setsize{S(t',\vals)} > 5\Ex[\vals]{\setsize{S(t',\vals)}}} < \left(\frac{e^4}{5^5}\right)^{2\cdot\log k} 
	= o(1/k^2).
	\] 
	For all $x \in [t', t)$ we have $\setsize{S(x,\vals)} \leq \setsize{S(t',\vals)}$ and $\Ex[\vals]{\setsize{S(x,\vals)}} \geq \Ex[\vals]{\setsize{S(t,\vals)}} = \frac{1}{2}\cdot\Ex[\vals]{\setsize{S(t',\vals)}}$. Thus, if $\setsize{S(t',\vals)} < 5\Ex{\setsize{S(t',\vals)}}$, then $\setsize{S(x,\vals)} < 10\Ex{\setsize{S(x,\vals)}}$ for all $x \in [t',t)$.  Hence,
	\[
	\prob[\vals]{\vphantom{\sum}\exists x \in [t',t)~:~\setsize{S(x,\vals)} > 10\Ex[\vals]{\setsize{S(x,\vals)}}} = o(1/k^2).
	\] 
	Consider the list of intervals $[r_0, r_1),[r_1,r_2),...,[r_{\ell-1}, r_{\ell})$, where $\Ex[\vals]{\setsize{S(r_j,\vals)}} = 2\Ex[\vals]{\setsize{S(r_{j-1},\vals)}}$ for all $j=1,\ldots,\ell$ and such that $r_0 = 0$ (or alternatively the lowest value in the support of any bidder) and $r_{\ell} = t$.  We may apply the same Chernoff bound argument that we did for the interval $[t', t)$ to any of these intervals and get a bound of $o(1/k^{2^{\ell-j+1}})$ for the interval $[r_{j-1},r_{j})$. Then, by the union bound over all the intervals we get an upper bound of 
	\[
	\prob[\vals]{ \exists x \in \cup_{j=1}^{j=\ell} [r_{j-1}, r_j)~:~ \setsize{S(x, \vals)} > 10\cdot \Ex[\vals]{\setsize{S(x,\vals)}}}\le 
	\sum_{j=1}^{j=\ell}o\left(  \frac{1}{k^{2^{\ell-j+1}}} \right)=o\left(\frac{1}{k^2}\right).
	\]
	Consequently, $\prob[\vals]{\exists x \in [0, t) ~:~ \setsize{S(x,\vals)} > 10\Ex[\vals]{\setsize{S(x,\vals)}}} = o(1/k^2) $.
\end{proof}

We now write the following  low-high decomposition of the benchmark.

\be
\label{eq:low_high_decomposition}
\opt=\Ex{\max_{S\in\maxsets}\sum_{i\in S}\vali}\le
\underbrace{\Ex{\max_{S\in\maxsets}\sum_{i\in S}\min(\vali,t_S)}}_{\text{$\mathsf{LOW}$}}+
\underbrace{
\Ex{\max_{S\in\maxsets}\sum_{i\in S}\vali\cdot\ind{\vali > t_S}}}_{\text{$\mathsf{HIGH}$}},
\ee
where for every bidder $i$ 
we define two regimes with respect to every set $S$ such that $i \in S$:
\[
\text{low-value part: } \valciS\eqdef\min\{t_S,\vali\},
\quad\quad
\text{high-value part: } \valhiS\eqdef \vali\cdot\ind{\vali > t_S},
\]
so $\valciS\in[0,t_S]$, and if $\valhiS > 0$ then it must be that $\valhiS > t_S.$ 
Also, one can easily verify that 
$\vali\le \valciS+\valhiS$ for any bidder $i$ and any set $S$ such that $i \in S$.  

We first establish the following core-tail decomposition of $\mathsf{HIGH}= \Ex[\vals]{\max_{S\in\maxsets}\sum_{i\in S}\valhi}$ that  allows us to reduce the problem to the case where all sets $S\in\maxsets$ have size of at most $O(\log k)$. We define the core part of $\mathsf{HIGH}$ as
\[
\highcore\eqdef\Ex[\vals]{\max_{S\in\maxsets}\InParentheses{\sum_{i\in S}\valhi \cdot\ind{|S(t_S,\vals)|\le 10\log k +1}}}.
\]
It follows that
\begin{eqnarray}
\label{eq:high_core_tail_decomposition}
 \mathsf{HIGH} &\le & \highcore
+\Ex[\vals]{\max_{S\in\maxsets}\InParentheses{\sum_{i\in S}\valhi \cdot\ind{|S(t_S,\vals)| > 10\log k +1}}}\nonumber \\
&\le & \highcore +\sum_{S\in\maxsets}\Ex[\vals]{\sum_{i\in S}\valhi \cdot\ind{|S(t_S,\vals)| > 10\log k +1}}\nonumber \\
&\le & \highcore +\sum_{S\in\maxsets}\Ex[\vals]{\sum_{i\in S}\vali \cdot\ind{|S(t_S,\vals)-\{i\}|> 10\log k}}\nonumber \\
&= & 
\highcore +
\underbrace{\sum_{S\in\maxsets}\sum_{i\in S}\Ex{\vali}\cdot\prob[\vals]{|S(t_S,\vals)-\{i\}|> 10\log k}}_{\text{$\hightail$}}.
\end{eqnarray}

In the second inequality, we replace max by sum. In the third, we modify the event in the right hand side so that it is independent of the value of bidder $i$; we also use the fact that $\valhi \leq v_i$. In the last inequality we use the independence in the right hand side to write the expectation as a product.

We claim that $\hightail$ can be covered by the simple single-price clock auction which sets zero prices to all bidders and selects the set with maximum expected value. 
Note that it is extremely unlikely that there are more than $10\log k$ bidders above the threshold $t_S$ in any set $S$, i.e., the event that $|S(t_S,\vals)|>10\Ex{S(t_S,\vals)}+1=10\log k+1$ for each $S\in\maxsets$ has probability of $o(1/k^2)$.  Indeed, Lemma~\ref{lem:chernoff1} gives an even stronger tail probability bound. 
Note that a simple single-price clock auction with prices at zero would select the feasible set
with maximum expected value and get expected social welfare of  $\max_{S\in\maxsets}\Ex{\sum_{i\in S}\vali}\ge\frac{1}{k}\sum_{S\in\maxsets}\sum_{i\in S}\Ex{\vali}$, which is more than $\sum_{S\in\maxsets}\sum_{i\in S}\Ex{\vali}\cdot o(1/k^2)$. 
Recall that 
\[\mathsf{LOW}=\Ex{\max_{S\in\maxsets}\sum_{i\in S}\valciS}=\Ex{\max_{S\in\maxsets}\sum_{i\in S}\min(\vali,t_S)}.\]
Next we show that the single-price clock auction with zero prices is a $O(1)$-approximation to the $\mathsf{LOW}$ term in~\eqref{eq:low_high_decomposition}.  To this end, it is useful to rewrite $\sum_{i\in S}\valciS=\int_0^{t_S}|S(x,\vals)| \dd x$. By linearity of expectations we also have 
\[
\Ex[\vals]{\sum_{i\in S}\valciS}=\Ex[\vals]{\int_0^{t_S}|S(x,\vals)| \dd x}=
\int_0^{t_S}\Ex[\vals]{|S(x,\vals)|} \dd x
\] 
We use a slightly more complex core-tail decomposition than \eqref{eq:high_core_tail_decomposition} for the $\mathsf{LOW}$ term of \eqref{eq:low_high_decomposition}.

\[\lowcore\eqdef \Ex[\vals]{\max_{S\in\maxsets}\InParentheses{\sum_{i\in S}\valciS \cdot
\ind{\forall x\in[0,t_S)~:~ |S(x,\vals)|\le 10\Ex[\vals]{|S(x,\vals)|}+1}}}
\]
It follows that
\begin{align}
\label{eq:low_core_tail_decomposition}
\mathsf{LOW}
&\le \lowcore
+ \Ex[\vals]{\max_{S\in\maxsets}\InParentheses{\sum_{i\in S}\valciS \cdot
\ind{\exists x\in[0,t_S)~:~ |S(x,\vals)|> 10\Ex[\vals]{|S(x,\vals)|}+1}}} \nonumber \\  
&\le \lowcore + \sum_{S\in\maxsets} \Ex[\vals]{\sum_{i\in S}\valciS \cdot
\ind{\exists x\in[0,t_S)~:~ |S(x,\vals)|> 10\Ex[\vals]{|S(x,\vals)|}+1}} \nonumber \\  
&\le \lowcore + \sum_{S\in\maxsets} \Ex[\vals]{\sum_{i\in S}\vali \cdot
\ind{\exists x\in[0,t_S)~:~ |S(x,\vals)-\{i\}|> 10\Ex[\vals]{|S(x,\vals)| }}} \nonumber \\  
&= \lowcore +
\underbrace{\sum_{S\in\maxsets}\sum_{i\in S}\Ex{\vali}\cdot\prob[\vals]{\exists x\in[0,t_S)~:~|S(x,\vals)-\{i\}|> 10\Ex[\vals]{|S(x,\vals)|}}}_{\text{$\lowtail$}}
\end{align}

We again may cover the tail (second) term on the right hand side of \eqref{eq:low_core_tail_decomposition} since by Lemma~\ref{lem:chernoff1} $\prob[\vals]{\exists x\in[0,t_S):~|S(x,\vals)-\{i\}|> 10\Ex[\valcs]{|S(x,\vals)|}} = o(1/k^2)$. 
Hence, $\max_{S\in\maxsets}\Ex{\sum_{i\in S}\vali}\ge\frac{1}{k}\sum_{S\in\maxsets}\sum_{i\in S}\vali$ is at least $\Theta(k)$ times the value of the $\lowtail$ term in the right hand side of~\eqref{eq:low_core_tail_decomposition}.
Let us now estimate the $\lowcore$ term in \eqref{eq:low_core_tail_decomposition}. Let 
\[
\event(\vals)\eqdef \ind{\forall x\in[0,t_S)~:~ |S(x,\vals)|\le 10\Ex[\vals]{|S(x,\vals)|}+1}
\] 
be the indicator of the event that for all thresholds $x$ in the range from $0$ to $t_S$ the number of bidders in $S$ with values above $x$ does not exceed the expected number by a factor of $10$. Then 
\begin{multline*}
\lowcore=\Ex[\vals]{\max_{S\in\maxsets}\InParentheses{\sum_{i\in S}\valciS\cdot \event(\vals)}}=
\Ex[\vals]{\max_{S\in\maxsets}\InParentheses{\event(\vals)\cdot\int_0^{t_S}|S(x,\vals)| \dd x }}\\
\le \Ex[\vals]{\max_{S\in\maxsets}\int_0^{t_S}(10\Ex[\vals]{|S(x,\vals)|}+1) \dd x }=
\max_{S\in\maxsets}\int_0^{t_S}(10\Ex[\vals]{|S(x,\vals)|}+1) \dd x\\
\leq \max_{S\in\maxsets}\int_0^{t_S}(11\Ex[\vals]{|S(x,\vals)|}) \dd x
=11\max_{S\in\maxsets}\Ex[\vals]{\int_0^{t_S}|S(x,\vals)|\dd x}
=11\max_{S\in\maxsets}\Ex[\vals]{\sum_{i\in S}\vali}.
\end{multline*}
In other words, we get that the single-price clock auction with zero prices gets a constant approximation to the $\mathsf{LOW}$ term of \eqref{eq:low_high_decomposition}.

Inequalities \eqref{eq:low_high_decomposition} and \eqref{eq:high_core_tail_decomposition} give us an upper bound $\opt\le \mathsf{LOW}+ \hightail+\highcore$, where the welfare of a single-price clock auction with zero prices covers the terms $\mathsf{LOW}$ and $\hightail$.

It remains to show we can obtain a $O(\log \log k)$ approximation to the $\highcore$ term of the benchmark using a single-price clock auction.  Our next lemma gives an upper bound on the $\highcore$ term.  In $\highcore$, we may effectively assume that each set $S \in \maxsets$ is of size at most $10\log k + 1$.

\begin{lemma}\label{lem:singlefixedprice} 
Let $m$ be any positive integer, and let 
\[
\Delta\eqdef\Ex{\max_{S\in\maxsets}\InParentheses{\sum_{i\in S}\valhi \cdot\ind{|S(t_S,\vals)|\le m}}}.
\]
There exists a uniform price $\price$ such that 
\[
\Delta
\le 
O(\log m)\cdot \Ex{\max_{S\in\maxsets}\sum_{i\in S}\Ex{\vali ~\big|~ \vali\ge\price}\cdot\ind{\vali\ge\price}}.
\]
\end{lemma}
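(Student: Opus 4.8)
The plan is to pass from the per-set sum of (at most $m$) high values to a geometric sum of order statistics, which produces only $O(\log m)$ ``layers'', and then to charge each layer to the benchmark evaluated at a single, carefully chosen uniform price.

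\textbf{Step 1: dyadic decomposition by rank.} Fix a realization $\vals$ and let $S^*$ attain the maximum defining $\Delta$. If that maximum is positive, then $r:=|S^*(t_{S^*},\vals)|\le m$ and the maximizing expression is $\sum_{\ell=1}^{r}v^{(\ell)}$, where $v^{(1)}\ge\cdots\ge v^{(r)}$ are the values in $S^*$ exceeding $t_{S^*}$, in decreasing order. Grouping the ranks into dyadic blocks $\{2^j,\dots,2^{j+1}-1\}$ and bounding every value in block $j$ by $v^{(2^j)}$ gives $\sum_{\ell=1}^{r}v^{(\ell)}\le\sum_{j=0}^{\lfloor\log_2 m\rfloor}2^j v^{(2^j)}\ind{r\ge 2^j}$. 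Writing $N(x):=\max_{S\in\maxsets}|\{i\in S:\vali\ge x\}|$ (non-increasing in $x$) and $X_j:=\sup\{x:N(x)\ge 2^j\}$ (the largest $2^j$-th order statistic over all maximal sets), and noting $v^{(2^j)}\le X_j$ on $\{r\ge 2^j\}$, we get $\Delta\le\sum_{j=0}^{\lfloor\log_2 m\rfloor}2^j\,\Ex[\vals]{X_j}$, a sum of $O(\log m)$ terms. The identity that drives everything afterward is $\prob[\vals]{X_j\ge x}=\prob[\vals]{N(x)\ge 2^j}$.

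\textbf{Step 2: charging one layer to one price.} Fix $j$ and choose $p_j$ to be the largest value with $\prob[\vals]{N(p_j)\ge 2^j}\ge\tfrac12$ (if none exists, take $p_j=0$ and argue directly). Write $B(p):=\Ex[\vals]{\max_{S\in\maxsets}\sum_{i\in S}\Ex{\vali\mid\vali\ge p}\cdot\ind{\vali\ge p}}$ for the benchmark at price $p$, and split $\Ex[\vals]{X_j}=\int_0^{p_j}\prob[\vals]{N(x)\ge 2^j}\dd x+\int_{p_j}^{\infty}\prob[\vals]{N(x)\ge 2^j}\dd x$. For the first (``core'') part, bound the integrand by $1$, so $2^j\int_0^{p_j}\prob[\vals]{N(x)\ge 2^j}\dd x\le 2^j p_j$; since $\Ex{\vali\mid\vali\ge p_j}\ge p_j$ we have $B(p_j)\ge p_j\,\Ex[\vals]{N(p_j)}\ge p_j\cdot 2^j\cdot\prob[\vals]{N(p_j)\ge 2^j}\ge\tfrac12\,2^j p_j$, so the core part is at most $2B(p_j)$. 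For the second (``tail'') part, observe that whenever $X_j\ge x>p_j$ some maximal set has at least $2^j$ bidders of value $\ge x$, hence contributes at least $2^j(x-p_j)$ to $\sum_{i\in S}(\vali-p_j)^+$; thus $2^j(X_j-p_j)^+\le\max_{S\in\maxsets}\sum_{i\in S}(\vali-p_j)^+$, and so $2^j\int_{p_j}^{\infty}\prob[\vals]{N(x)\ge 2^j}\dd x=2^j\,\Ex[\vals]{(X_j-p_j)^+}\le\Ex[\vals]{\max_{S\in\maxsets}\sum_{i\in S}(\vali-p_j)^+}$. It then remains to bound this last quantity by $O(1)\cdot B(p_j)$, using that $B(p_j)$ credits every active bidder $i$ with the full conditional mean $\Ex{\vali\mid\vali\ge p_j}$, which absorbs the entire mass of $\vali$ above $p_j$, together with the fact that the calibration of $p_j$ puts $2^j$ above the median of $N(p_j)$, limiting how many maximal sets can be simultaneously ``heavy'' above $p_j$.

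\textbf{Step 3 and the main obstacle.} Granting $2^j\,\Ex[\vals]{X_j}\le C\cdot B(p_j)$ for an absolute constant $C$ and every $j\le\lfloor\log_2 m\rfloor$, let $p$ be whichever of the $O(\log m)$ candidates $p_0,\dots,p_{\lfloor\log_2 m\rfloor}$ maximizes $B(\cdot)$; then $\Delta\le\sum_j 2^j\Ex[\vals]{X_j}\le C\sum_j B(p_j)\le C\,(\lfloor\log_2 m\rfloor+1)\,B(p)$, which is the desired $O(\log m)\,B(p)$ bound. The hard part is the tail estimate $\Ex[\vals]{\max_{S}\sum_{i\in S}(\vali-p_j)^+}\le O(1)\cdot B(p_j)$: for heavy-tailed value distributions the \emph{expected maximum} over sets of the excess-sums can dwarf the \emph{maximum of their expectations}, and reconciling the two is precisely what the calibrated $p_j$ and the conditional expectations in the benchmark are for. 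Everything else — the rank-wise dyadic split, the order-statistic identity $\{X_j\ge x\}=\{N(x)\ge 2^j\}$, the core estimate, and the final union over the $O(\log m)$ candidate prices — is routine.
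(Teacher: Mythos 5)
The proof is incomplete. You take a genuinely different route from the paper — a dyadic decomposition by \emph{rank} (order statistics $v^{(2^j)}$) rather than the paper's dyadic decomposition by \emph{value} — and Steps 1 and 2 (core) are correct as far as they go: the pointwise bound $\sum_{\ell=1}^r v^{(\ell)}\le\sum_{j}2^j v^{(2^j)}\ind{r\ge 2^j}$, the identity $\{X_j\ge x\}=\{N(x)\ge 2^j\}$, and the core estimate $2^j\int_0^{p_j}\prob{N(x)\ge 2^j}\dd x\le 2^j p_j\le 2B(p_j)$ all check out. But the whole argument hinges on the tail estimate
\[
\Ex[\vals]{\max_{S\in\maxsets}\sum_{i\in S}(\vali-p_j)^+}\;\le\;O(1)\cdot B(p_j),
\]
which you explicitly defer (``Granting\dots'', ``It then remains\dots'', ``The hard part is\dots'') and never prove. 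This is not a small detail: it is exactly the place where one must compare an \emph{expected max} of heavy-tailed excess sums to a quantity built from \emph{conditional means}, and the median calibration $\prob{N(p_j)\ge 2^j}\ge 1/2$ of $p_j$ does not by itself deliver the bound — it only controls how often one set is ``heavy,'' not how much total excess mass lives above $p_j$ across all sets. As written, the lemma is therefore not proved.

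The paper sidesteps this entirely by bucketing by \emph{value} rather than rank. It first handles the contribution from bidders with $\valhi\ge 2\Delta$ by a prophet-style argument: since $\Delta\ge 2\Delta\cdot\prob{\exists i:\valhi\ge 2\Delta}$, one gets $\sum_i q_i\le\ln 2$ (with $q_i=\prob{\valhi\ge 2\Delta}$), so posting the single price $2\Delta$ recovers a constant fraction of that contribution. It then discards bidders with $\valhi\le\Delta/(2m)$ (they contribute at most $\Delta/2$ because each contributing set has at most $m$ such bidders), and partitions the remaining range $[\Delta/(2m),2\Delta]$ into $O(\log m)$ geometric buckets; by pigeonhole, some bucket's contribution is an $O(\log m)$ fraction of the remainder, and posting the bucket's lower endpoint yields revenue (hence welfare) within a factor $2$ of that bucket. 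No analogue of your tail estimate arises because the top bucket is \emph{capped} at $2\Delta$, and everything above that is handled by the separate prophet argument. If you want to salvage your rank-based decomposition, you would need to supply a genuine proof of the tail estimate (or cap $X_j$ at a multiple of $\Delta$ and treat the remainder separately, which essentially recovers the paper's value-splitting idea); until then, the proof has a real gap.
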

Note that $\Ex{\max_{S\in\maxsets}\sum_{i\in S}\Ex{\vali ~\big|~ \vali\ge\price}\cdot\ind{\vali\ge\price}}$ is precisely the expected welfare we get in a single-price clock auction with a uniform price $p$.
Lemma~\ref{lem:singlefixedprice} shows that the best choice of a single uniform price 
will give an $O(\log m)$-approximation to $\Delta$, which translates to $O(\log\log k)$-approximation to $\highcore$\footnote{This is not very surprising given similar statements in the literature, see, e.g., Corollary 1 in \cite{R2016}.} with $m = 10 \log{k} + 1$. We give a proof sketch here; the complete proof appears in  Appendix~\ref{sec:addlproofs}. 

\begin{proof}[Proof sketch for Lemma~\ref{lem:singlefixedprice}]
Consider the contribution of bidders with value less than $2\cdot \Delta$ to the benchmark (equals to $\Delta$). The $O(\log m)$ approximation guarantee is achieved by the revenue (not the welfare) of a \cauct.  Observe that the contribution to the benchmark of bidders with $\valhi\le\frac{\Delta}{2m}$ is at most $\frac{\Delta}{2}$ so by ignoring all such bidders we lose at most a constant factor of $\Delta$.  Consider partitioning all remaining bidders in $N$ into $\log{m} + 2$  buckets $\left[\frac{\Delta}{2m}, \frac{\Delta}{m}\right], \left[\frac{\Delta}{m}, \frac{2\Delta}{m}\right], \cdots, \left[\Delta, 2\Delta\right]$ based on their value. 
Let $B$ be the bucket with the largest expected contribution among all these $O(\log m)$ buckets.
The revenue obtained by posting the lower bound of the interval for $B$ is a $2$-approximation to the welfare contained in $B$ for each set and, in particular, the one with maximum realized value.

On the other hand, the uniform price auction with $p=2\Delta$ covers the constant fraction of the contribution from the bidders with $\valhi\ge 2\Delta$. Indeed, if $\vali\ge 2\Delta$ for some $i$, then with a constant probability $i$ is the only such bidder. I.e., a \cauct\ with $p=2\Delta$ takes all such bidders with constant probability and achieves $O(1)$ approximation to the contribution of bidders with $\valhi\ge 2\Delta$.
\end{proof}

Thus, a single-price clock auction achieves a $O(\log \log k)$ approximation to the $\mathsf{HIGH}$ term in the right side of \eqref{eq:low_high_decomposition}. Combined with the constant approximation given by a single-price clock auction with zero prices to the $\mathsf{LOW}$ term, we are ready to present the proof of Theorem \ref{th:k_sets_single}.

\begin{proof}[Proof of Theorem \ref{th:k_sets_single}]
By the analysis above, we can achieve the desired approximation using one of the following uniform price clock auctions (the one with the highest expected social welfare):
	\begin{enumerate}
		\item Either let $\pricei=0$ for all $i\in [n]$,
		\item or choose a $j\in[0, \log (10\log k+1)]$ and let $ \pricei=\Delta\cdot 2^{1-j}$ for all $i\in [n]$, where
		\[
		\Delta=\Ex{\max_{S\in\maxsets}\InParentheses{\sum_{i\in S}\vali\cdot\ind{\vali>t_S} \cdot\ind{|S(t_S,\vals)|\le 10\log k+1}}}.
		\]
	\end{enumerate}
	Note that these prices can be found using only the thresholds $t_S$ and the value of $\Delta$.  Both $t_S$ and $\Delta$ can be inferred from the bidder distributions and constraint system, which are public information.
 The uniform price of zero covers the $\mathsf{LOW}$ term of our benchmark decomposition \eqref{eq:low_high_decomposition} and the $\hightail$ portion of \eqref{eq:high_core_tail_decomposition}, whereas the uniform prices of $\Delta \cdot 2^{1-j}$ cover the $\highcore$ portion of \eqref{eq:high_core_tail_decomposition} (which, together, cover the $\mathsf{HIGH}$ term of \eqref{eq:low_high_decomposition}).
\end{proof}

\subsection{Alternative Parameterizations}
\label{sec:alternative}

Similar to~\cite{CGS2022}, our bounds are parameterized by $k$, the number of maximal feasible sets of the instance. Alternative parameterizations that have been considered include, e.g., the number of agents, $n$, or the size of the largest feasible set, $r$. For the class of instances with \dsets, which are known to pose an obstacle for both posted-price mechanisms and deterministic prior-free clock auctions, our bounds improve prior results with respect to both parameters.

When using the number of agents as a parameter, 
it is known that even for \dsets, no 
posted-price mechanism can achieve better than a $O\left(\log{n}/\log{\log{n}}\right)$ approximation \cite{BIK2007, R2016}, and no deterministic prior-free clock auction can achieve a $O(\log^{1-\epsilon}{n})$ approximation for any constant $\epsilon>0$ \cite{DGR2017}\footnote{This result and all the results in Sections \ref{sec:single-price-dc} and \ref{sec:dc} focus on deterministic mechanisms.}.
Since, for \dsets\ the number of maximal feasible sets is $k\leq n$, our auctions achieve a $O(\log \log n)$ approximation, which is an exponential improvement compared to both posted-price mechanisms and deterministic prior-free clock auctions.

When using the size of the largest feasible set, $r$, as the parameter, \citet{R2016} demonstrated that posted-price mechanisms can achieve a $O(\log{n} \log{r})$ approximation for general downward-closed feasibility constraints. Applying the mechanism described in Theorem~\ref{th:k_sets_single}, but with $\Delta = \opt$ and $m = r$ (where $m$ is from the statement of Lemma~\ref{lem:singlefixedprice}), we get an improved bound with respect to this parameter as well.
Indeed, Lemma~\ref{lem:singlefixedprice} holds for any threshold $t_S$, in particular for $t_S=0$. Since the size of every feasible set is at most $r$, we have $\ind{|S(t_S,\vals)|\le r}=1$ for all $\vals$. Lemma~\ref{lem:singlefixedprice} then shows the existence of a single price $p$ such that the corresponding \cauct\ achieves a $O(\log r)$ approximation to $\Delta = \opt$.

\section{Deterministic Clock Auction with Limited Information}
\label{sec:dc}
Our main result in this section is a deterministic uniform price clock auction that achieves an approximation of $O(\log\log k)$ for any downward-closed $\feasible$, using only limited prior information regarding the distributions: the expected value of each bidder and the expected value of the optimal solution $\opt=\Ex[\vals]{\max_{S\in \maxsets}\sum_{i\in S}\vali}$. This is in stark contrast to deterministic clock auctions without this information, which cannot achieve any constant approximation even for $k=2$~\cite{DGR2017}.  In other words, by employing a slightly more complicated auction procedure, i.e., a uniform ascending price procedure, one can circumvent the need for full distributional knowledge by ``discovering'' the appropriate price from Section \ref{sec:single-price-dc} which gave the approximation guarantee.

This clock auction contributes to the growing literature aiming to design more robust auctions that do not require full access to the priors. For example, in a ``parametric auction''~\cite{AMDW2013,AM13} the seller only knows some parameters of the distributions (e.g., the mean). Another related literature is the one focusing on the design of auctions that instead have access to a limited number of samples from each distribution (e.g.,~\cite{ColeR14,HuangMR18,correa2019prophet,correa2020two,ezra2018prophets,RWW2020,AKW19}).

Our proposed auction starts by checking whether there exists some maximal feasible set $S\in \maxsets$ whose expected value is at least $\opt/\log{\log{k}}$. If such a set exists, then the desired approximation can easily be achieved by accepting all the bidders of that set, for a price of 0, and rejecting all other bidders (we refer to this part of the auction as ``$\auco$''). In the more interesting case where no such set exists, our auction uniformly raises the price of every active bidder using some arbitrarily small step $\delta>0$, giving each bidder the opportunity to drop out after each price increase. While we present the auction as offering this price simultaneously to all bidders for simplicity, we instead offer each bidder the increased price one at a time in a consistent, but arbitrary, order. The auction terminates when the remaining active bidders become feasible, or the revenue of some feasible subset of active bidders reaches a desired goal of $g=\opt/(4\alpha)$ where $\alpha$ is the approximation guarantee given by Corollary~\ref{lem:fixedprice} in Appendix \ref{sec:addlproofs} (we refer to this part of the auction as ``$\uprice$'').  Given its similarity to the proof of Theorem \ref{th:k_sets_single}, we defer the full proof of Theorem \ref{th:k_sets} to Appendix \ref{sec:addlproofs} and instead provide a sketch below, highlighting the key differences.
\vspace{3pt}

\LinesNumbered
\begin{algorithm}[H]
	\SetAlgoLined
	\DontPrintSemicolon
	Let all bidders be active $\activebidders=N$ and set clock price $p=0$\;  
	\If{$\max_{S\in\maxsets}\sum_{i\in S}\Ex{\vali}\geq \opt/\log{\log{k}}$} {
		Accept the bidders in $\arg\max_{S\in\maxsets}\sum_{i\in S}\Ex{\vali}$, charge each of them $p=0$, and reject everyone else
	}
	\Else{
		\textbf{Repeat:} increase $p$ by some $\delta>0$ and offer $p$ to all $i\in \activebidders$; remove from $\activebidders$ bidders who reject $p$\;
		\textbf{Until:} $\activebidders$ is feasible (i.e., $\activebidders\in \feasible$) \textbf{or} there exists some $F\subset \activebidders$ such that $F\in \feasible$ and $|F|\cdot p\ge g$\; 
		Accept the largest feasible set $F\subseteq \activebidders$, charge each $i\in F$ a price of $p$, and reject everyone else
	}
	\caption{A uniform-price clock auction with limited information}
	\label{mech:uniform}
\end{algorithm}
\vspace{3pt}

\begin{theorem}
	\label{th:k_sets}
	Mechanism~\ref{mech:uniform} achieves a $O(\log{\log k})$ approximation to the expected optimal welfare for any downward-closed $\feasible$ using only $\Ex{\vali}$ for each $i\in N$ and $\opt=\Ex[\vals]{\max_{S\in \maxsets}\sum_{i\in S}\vali}$. 
\end{theorem}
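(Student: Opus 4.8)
The plan is to mirror the structure of the proof of Theorem~\ref{th:k_sets_single}, reusing the same low--high and core--tail decompositions of $\opt$, while replacing the steps that relied on full distributional knowledge with arguments that use only $\Ex{\vali}$ and $\opt$. First I would observe that the auction's initial test (line 2) handles exactly the case covered by the single-price clock auction with zero prices in Section~\ref{sec:single-price-dc}: if $\max_{S\in\maxsets}\sum_{i\in S}\Ex{\vali}\ge \opt/\log\log k$, then $\auco$ trivially gives an $O(\log\log k)$ approximation. So the remaining work is entirely in the \textbf{else} branch, under the standing assumption that $\max_{S\in\maxsets}\Ex{\sum_{i\in S}\vali} < \opt/\log\log k$. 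Under this assumption, exactly as in~\eqref{eq:high_core_tail_decomposition} and~\eqref{eq:low_core_tail_decomposition}, the $\mathsf{LOW}$, $\hightail$, and $\lowtail$ terms are all dominated by $\max_{S\in\maxsets}\Ex{\sum_{i\in S}\vali}$ (here even more easily, since that quantity is at most $\opt/\log\log k$, so it is already an $O(\log\log k)$ approximation to $\opt$ on its own, and the tail terms are $o(1)$ fractions of it). Hence the whole burden reduces to showing that $\uprice$ recovers an $O(\log\log k)$ approximation to $\highcore$, i.e.\ to the benchmark restricted to high-value bidders in sets of size $O(\log k)$.

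For the $\highcore$ part, the key conceptual point is that Lemma~\ref{lem:singlefixedprice} (equivalently Corollary~\ref{lem:fixedprice}) guarantees the \emph{existence} of a single uniform price $\price$ whose associated revenue (not just welfare) is an $\Omega(1/\log m) = \Omega(1/\log\log k)$ fraction of $\highcore$. The single-price auction of Section~\ref{sec:single-price-dc} could compute that price directly from the distributions; here Mechanism~\ref{mech:uniform} instead ``discovers'' it by sweeping $p$ upward from $0$ and stopping as soon as some feasible subset of active bidders certifies revenue at least $g = \opt/(4\alpha)$. I would argue this in two cases on the run of the sweep. Case (a): the sweep halts because some feasible $F\subseteq\activebidders$ has $|F|\cdot p \ge g$; then the auction accepts a feasible set of (at least) that revenue, so its welfare is at least $g = \Omega(\opt/\log\log k)$ as desired (using that $\alpha = O(\log\log k)$ by Corollary~\ref{lem:fixedprice} with $m = 10\log k+1$). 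Case (b): the sweep never reaches the revenue goal and instead terminates because $\activebidders$ becomes feasible; then at the final price $p$ every feasible set of survivors had revenue below $g$ throughout, and in particular at the ``good'' price $\price$ promised by Lemma~\ref{lem:singlefixedprice} the set of bidders with $\vali\ge \price$ contained no feasible subset of revenue $\ge g$ --- but Lemma~\ref{lem:singlefixedprice} says the expected revenue of the best feasible subset at price $\price$ is $\Omega(\highcore/\log\log k)$, and by Markov / a constant-probability concentration argument (exactly the ``only one bidder above $2\Delta$ with constant probability'' idea in the proof sketch of Lemma~\ref{lem:singlefixedprice}) this forces the realized best-subset revenue at price $\price$ to exceed $g$ with constant probability unless $\highcore = O(\opt/\log\log k)$ to begin with, contradiction --- so in this case $\highcore$ is already small and the $\auco$/zero-price/tail bounds suffice. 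Assembling: $\opt \le \mathsf{LOW} + \hightail + \highcore$, each summand is $O(\log\log k)$ times the realized welfare of the branch of Mechanism~\ref{mech:uniform} that fires, hence $\opt \le O(\log\log k)\cdot \Ex{\textsf{auc}}$.

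The main obstacle I anticipate is the bookkeeping in Case (b): making rigorous the claim that ``the sweep terminated by feasibility rather than by hitting the revenue goal'' implies a bound on $\highcore$. The subtlety is that the uniform ascending sweep removes bidders as it goes, so the set of active bidders at price $\price$ is not literally $\{i : \vali \ge \price\}$ --- it is exactly that set (bidders drop out precisely when the clock passes their value), which is good, but one must be careful that the revenue-goal stopping condition is checked against \emph{every} feasible subset of the current active set at \emph{every} price level, so that if a revenue-$\ge g$ feasible subset ever existed at any price $\le \price$ the auction would have already stopped with that much revenue. Formalizing this ``the auction stops at the first price at which any feasible subset certifies revenue $g$'' monotonicity, and reconciling it with the probabilistic guarantee of Lemma~\ref{lem:singlefixedprice} (which is about expectations over $\vals$, whereas the auction's behavior is realization-by-realization), is where the real care is needed; the relevant trick is the same constant-probability single-high-bidder argument used for Lemma~\ref{lem:singlefixedprice}, applied conditionally on the bucket structure, to convert the expectation bound into a ``with constant probability the realized best feasible subset at some price in $[\price, 2\price]$ has revenue $\ge g$'' statement. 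Everything else --- the decomposition~\eqref{eq:low_high_decomposition}, the Chernoff bound of Lemma~\ref{lem:chernoff1}, the covering of $\mathsf{LOW}$, $\hightail$, $\lowtail$ by $\auco$ --- transfers essentially verbatim from the proof of Theorem~\ref{th:k_sets_single}.
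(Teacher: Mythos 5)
Your high-level plan is right and matches the paper's structure: decompose $\opt$ as before, let $\auco$ (zero prices) absorb $\mathsf{LOW}$ and the tails, reduce to $\highcore$, and have $\uprice$ discover a good price by ascending. But your treatment of the $\highcore$ part has a real gap, and it is exactly where the hard work of the paper's proof lives.

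Your two-case split on the sweep, and in particular your Case~(b), does not close. You want to say: if the sweep never certifies revenue $g$ and instead ends with $\activebidders$ feasible, then the realized revenue at the good price $\price$ never reached $g$, and since Corollary~\ref{lem:fixedprice} makes the \emph{expected} revenue $\Omega(\highcore/\log\log k)$, a ``Markov / constant-probability concentration argument'' should give a contradiction unless $\highcore$ is already small. This has two problems. First, Corollary~\ref{lem:fixedprice} is a dichotomy: the promise is either on the expected revenue $\price\cdot\Ex{\max_S\sum_{i\in S}\ind{\vali\ge\price}}$ \emph{or} on the expected welfare of the single highest accepting bidder $\Ex{\max_{i}\Ex{\vali\mid\vali\ge\price}\ind{\vali\ge\price}}$. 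In the latter branch your revenue-based contradiction never even gets off the ground, because the revenue at $\price$ can legitimately be far below $g$ in every realization. Second, even in the revenue branch, ``expected revenue large $\Rightarrow$ realized revenue $\ge g$ with constant probability'' is false without concentration, and the $2\Delta$-bucket argument you cite is not a concentration argument for this random variable. The paper handles the regime where concentration does hold by invoking the XOS Chernoff bound of Vondr\'ak for $|T(\vals)|=\max_S\sum_{i\in S}\ind{\vali\ge p}$, which only kicks in when $\Ex{|T(\vals)|}\ge 8$. When $\Ex{|T(\vals)|}<8$, the revenue is dominated by a single accepting bidder, there is no useful concentration, and the paper switches to a genuinely different argument: Claim~\ref{cl:u-price_expected_max} uses a stochastic-dominance coupling to show that when the auction usually halts below revenue $g$, it must be serving the highest-value bidder with probability at least $1/2$, yielding expected welfare $\ge\frac{1}{2}\Ex{\max_i\vali}$, which is then compared to $\highcore$ via the second branch of Corollary~\ref{lem:fixedprice}. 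Your proposal neither anticipates this second branch nor the coupling claim, and these cannot be substituted by Markov. So the case split you would need is not ``Case~(a): revenue goal reached, Case~(b): feasibility reached and derive a contradiction,'' but rather the paper's three cases keyed to $\Ex{|T(\vals)|}$ and to the probability of reaching $g$, with the small-expectation case closed by the max-bidder coupling.
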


\begin{proof}[Proof Sketch.]

Our proof proceeds similarly as in the last section, again employing our benchmark decomposition, inequality \eqref{eq:low_high_decomposition}.  Note that $\auco$ is exactly the single-price clock auction with all prices equal to zero, which, from our previous section, we know covers the $\lowcore$, $\lowtail$, and $\hightail$ portions of our benchmark.  
We can then assume that $\highcore$ is a $2$ approximation to $\opt$, as otherwise $\auco$ would be a $O(\log{\log{k}})$ approximation to $\opt$. Thus it suffices to show that $\uprice$ gives a $O(\log\log k)$ approximation to $\highcore$. 

From the proof of Lemma \ref{lem:singlefixedprice} in Appendix \ref{sec:addlproofs} we may conclude the following related corollary.
\begin{corollary}\label{lem:fixedprice} 
Let $m$ be any positive integer, and let 
\[
\Delta\eqdef\Ex{\max_{S\in\maxsets}\InParentheses{\sum_{i\in S}\valhi \cdot\ind{|S(t_S,\vals)|\le m}}}.
\]
There exist an $\alpha=O(\log m)$ and a uniform price $\price$ such that 
\[
\Delta\le\alpha\cdot\price\cdot\Ex{\max_{S\in\maxsets}\sum_{i\in S}\ind{\vali\ge\price}}
\quad\text{or}\quad
\Delta\le\alpha\cdot\Ex{\max_{i\in N}\Ex{\vali ~\big|~ \vali\ge\price}\cdot\ind{\vali\ge\price}}.
\]
\end{corollary}
Corollary \ref{lem:fixedprice}, thus, proves the existence of a uniform price $p$ such that either the expected revenue obtained by serving the set with the largest number of bidders with value above $p$ is a $O(\log \log k)$ approximation to $\highcore$ or the expected welfare from serving the single highest value bidder is a $O(\log \log k)$ approximation to $\highcore$. Without access to the full prior distributions, however, we cannot calculate this price in advance.  Instead, we use $\uprice$ to gradually sweep through a full range of prices, aiming to find one that yields revenue at least $g = \frac{\opt}{4\alpha}$. However, this could fail if $\uprice$ reaches a lower price that also satisfies the revenue target and stops too soon, selecting a set with low social welfare.  We address this by a more careful analysis of the expected welfare of $\uprice$, conditioning on the expected size $\Ex{|T(\vals)|}$ of the largest feasible set $T(\vals)$ of bidders who would accept price $p$.  When $\Ex{|T(\vals)|} \geq 8$ we show that with constant probability we serve at least half the expected number of bidders and obtain the $O(\log \log k)$ approximation from the \emph{revenue} of $\uprice$.  On the other hand, when $\Ex{|T(\vals)|} < 8$ we show that the expected \emph{social welfare} of $\uprice$ is within a constant factor of the single highest value bidder, which then gives us the desired approximation.
\end{proof}

\section{Randomized Prior-Free Clock Auction}\label{sec:rand-clock}

\newcommand{\thresh}[1]{\tau(#1)}
\newcommand{\topS}[1]{#1_{\text{top}}}
\newcommand{\threshSubset}[2]{#1(#2)}
\newcommand{\sampled}[1]{T(#1)}
\newcommand{\unsampled}[1]{U(#1)}

\newcommand{\maximalSets}{\mathcal{S}}

\newcommand{\hybridauc}{Hedging Auction}

\newcommand{\selectionThreshold}{\hat{q}}

In this section, we consider prior-free settings; i.e., settings where no information is given about the bidder values.
Using the analytical tools and intuition developed in the previous sections, we can now provide a tighter analysis of the randomized clock auction (the ``hedging auction'') introduced in \cite{CGS2022}, showing that it achieves a $O(\log \log k)$ approximation (an exponential improvement over the $O(\sqrt{\log k})$ bound shown in \cite{CGS2022}). 
This auction provides guarantees even in prior-free settings, but this is at the expense of simplicity and practicality; namely, it is randomized and uses non-uniform pricing.
Note that our randomized clock auctions are distributions over deterministic clock auctions, and thus they satisfy all the properties of deterministic clock auctions \emph{ex post} (rather than in expectation).

For completeness, we first explain how this auction works and provide some intuition regarding its approximation guarantee. 
The hedging function runs either the ``water-filling clock auction'' (WFCA) or the ``sampling auction", each with probability $1/2$.
We begin by defining the ``water-filling clock auction'' (WFCA).
Let $P=\{p= x \cdot \epsilon : x\in \mathbb{N}\}$ denote the set of possible prices that the WFCA considers.  As was demonstrated in \cite{CGS2022}, the WFCA obtains welfare at least $r^*/2$ where $r^* = \text{max}_{p \in P, F \in \feasible}\left\{p \cdot |i \in F : \vali \geq p|\right\}$, i.e., the maximum possible revenue an auction could obtain by offering one of the prices in $P$ to all bidders.  This then translates to an $O(\log{|S|})$-approximation for any given set $S$.

	\vspace{3pt}

	\begin{algorithm}[H]
		\SetKwInOut{Input}{Input}
		
		let $t\leftarrow 0$, $\activebidders \leftarrow \bidders$, and $p_i \leftarrow 0$ for all $i \in \bidders$
		
		\While{$\activebidders \notin \feasible$}{
			$t\leftarrow t+1$
			
			let $W \leftarrow \arg\max_{S \in \feasible: S \subseteq \activebidders}\left\{\sum_{i \in S}{\pricei}\right\}$ be the latest set of conditional winners \label{alg:line:W}

			let $\ell \leftarrow \min_{i \in \activebidders \setminus W}\{\pricei\}$ be the lowest price among conditional losers \label{alg:line:p}

			\ForEach{bidder $i \in \activebidders \setminus W$ with $\pricei=\ell$}{
				update $\pricei \leftarrow \pricei + \epsilon$
				
				\If{$i$ rejects updated price}
				{
					let $\activebidders \leftarrow \activebidders \setminus \{ i \}$
				}
			}
		}
		\Return{$\activebidders$} 
		\caption{The deterministic \emph{water-filling clock auction} (WFCA)}
		\label{alg:wfca}
	\end{algorithm}
	\vspace{3pt}

	We next describe the sampling auction. The sampling auction ``samples'' bidders independently with probability $1/2$ by raising their clock until they reject.  We let $T \subseteq N$ denote the sampled bidders and $U = N \setminus T$ denote the unsampled bidders.  This sampling process ideally allows us to estimate the value of each $S$ since $v(S) = 2 \cdot \Ex{v(S \cap U)} = 2 \cdot \Ex{v(S \cap T)}$.
	\vspace{3pt}
	
	\begin{algorithm}[H]
		\SetKwInOut{Input}{Input}
		\Input{The set $\maximalSets$ of all maximal sets in $\feasible$}
		
		let $T \leftarrow \emptyset$
		
		\For{each bidder $i \in \bidders$}{
			With probability $1/2$: increase the clock of $i$ until she rejects and let $T \leftarrow T \cup \{i\}$
			
		}
		
		Let $R\gets \argmax_{S\in \maximalSets}v(S\cap T)$

		\Return{$R \setminus T$} 
		\caption{The Sampling Auction:  A randomized clock auction}
		\label{alg:sampling-auction}
	\end{algorithm}
	\vspace{3pt}
	
	Intuitively, the water-filling auction and sampling auction work when sets of high value are small and large, respectively.  This is because when sets are small the optimal revenue in hindsight $r^*$ is close in value to the optimal welfare and when sets are large we expect with reasonably high probability that a random sample of the bidders will give a good estimate of the total value of the set.
    The ``Hedging Auction'' leverages this intuition.  It combines the benefits of the water-filling clock auction (WFCA) and the sampling auction by running one or the other with equal probability.  
    \vspace{3pt}
	
	\begin{algorithm}[H]
		\SetKwInOut{Input}{Input}
		\Input{The feasibility constraint $\feasible$, and set $\maximalSets$ of all maximal sets in $\feasible$}
		
		With probability $1/2$: Run the WFCA (Mechanism \ref{alg:wfca}) on input $\feasible$
		
		With the remaining probability $1/2$: Run the Sampling Auction (Mechanism \ref{alg:sampling-auction}) on input $\feasible$
		
		\caption{The \hybridauc\ }
		\label{alg:hedging}
	\end{algorithm}
	\vspace{3pt}
	
	\begin{theorem}\label{thm:hedging}
		The Hedging Auction obtains a $O(\log{\log{k}})$-approximation to the social welfare for any downward-closed feasibility constraint $\feasible$, where $k$ is the number of maximal feasible sets in $\feasible$.
	\end{theorem}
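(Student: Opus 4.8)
The Hedging Auction runs the WFCA with probability $1/2$ and the Sampling Auction with probability $1/2$, so it suffices to show, for every (adversarial) value profile $\vals$, that $W + Q \ge \opt / O(\log\log k)$, where $W$ is the (deterministic) welfare of the WFCA and $Q$ is the expected welfare of the Sampling Auction; we then handle the two summands in complementary regimes. The bridge to the Bayesian analysis of Section~\ref{sec:single-price-dc} is a \emph{virtual product distribution}: in a run of the Sampling Auction each bidder lands in the unsampled set $U$ independently with probability $1/2$, so let $\dist=\times_i\disti$ with $\disti$ placing mass $1/2$ on $\vali$ and mass $1/2$ on $0$, and set $\widetilde{\opt}:=\Exlong[\wals\sim\dist]{\max_{S\in\maxsets}\sum_{i\in S}\wali}$. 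Taking $S^*$ to be an optimal feasible set for $\vals$ gives $\widetilde{\opt}\ge\Exlong[\wals\sim\dist]{\sum_{i\in S^*}\wali}=\tfrac12\sum_{i\in S^*}\vali=\opt/2$, so $\opt\le 2\widetilde{\opt}$. I would then run the low--high and core--tail decompositions of Section~\ref{sec:single-price-dc} verbatim, but with respect to $\dist$ (so the thresholds are $t_S$ with $\Exlong[\wals\sim\dist]{\setsize{S(t_S,\wals)}}=\log k$, which here means $\setsize{\{i\in S:\vali>t_S\}}=2\log k$), obtaining $\widetilde{\opt}\le\mathsf{LOW}+\highcore+\hightail$, with all expectations now over the sampling coins.

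The WFCA takes care of $\highcore$. Recall it deterministically obtains welfare at least $r^*/2$ with $r^*=\max_{p\in P,\,F\in\feasible}p\cdot\setsize{\{i\in F:\vali\ge p\}}$. Apply Corollary~\ref{lem:fixedprice} to $\dist$ with $m=10\log k+1$ and $\alpha=O(\log m)=O(\log\log k)$. Because $\wali\in\{0,\vali\}$, the ``revenue'' alternative $\alpha\cdot p\cdot\Exlong[\wals\sim\dist]{\max_S\sum_{i\in S}\ind{\wali\ge p}}$ is at most $\alpha\cdot p\cdot\max_{F\in\feasible}\setsize{\{i\in F:\vali\ge p\}}$ (drop the intersection with $U$), and the ``top-bidder'' alternative $\alpha\cdot\Exlong[\wals\sim\dist]{\max_{i\in N}\Ex{\wali\mid\wali\ge p}\ind{\wali\ge p}}$ is at most $\alpha\cdot\max_i\vali$; rounding the offered price down to the nearest element of $P$ (for $\epsilon$ small enough this costs at most a factor $2$, and likewise $\max_i\vali\le 2r^*$), in either case $\highcore\le 2\alpha r^*=O(\log\log k)\cdot r^*\le O(\log\log k)\cdot W$. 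The leftover $\hightail=o(1/k^2)\sum_S\tfrac12 v(S)=o(\opt)$ is of lower order and is absorbed exactly as the corresponding tail terms are in Section~\ref{sec:single-price-dc}.

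It remains to cover $\mathsf{LOW}$ by the Sampling Auction, via a dichotomy on $r^*$. If $r^*\ge\opt/(C\log\log k)$ for a suitably large constant $C$, then the WFCA alone is an $O(\log\log k)$-approximation and we are done. Otherwise $\max_i\vali\le 2r^*\le 2\opt/(C\log\log k)$, which together with the previous paragraph forces $\highcore\le O(\opt/C)\le\opt/8$, hence $\mathsf{LOW}\ge\widetilde{\opt}-\highcore-\hightail=\Omega(\opt)$; moreover $\mathsf{LOW}=\Theta(L)$ for $L:=\max_{S\in\maxsets}\sum_{i\in S}\min(\vali,t_S)$ (the bound $\mathsf{LOW}\le L$ is termwise, and $\mathsf{LOW}\ge\Exlong[\wals\sim\dist]{\sum_{i\in S^\circ}\min(\wali,t_{S^\circ})}=L/2$ for $S^\circ$ attaining $L$). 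Let $R$ be the set returned by the Sampling Auction, so its welfare is $v(R\cap U)\ge\sum_{i\in R\cap U}\min(\vali,t_R)$ and $v(R\cap T)=\max_S v(S\cap T)\ge v(S^\circ\cap T)$. Conditioning on the high-probability event $\sum_{i\in S^\circ\cap T}\min(\vali,t_{S^\circ})\ge\tfrac14 L$ and on the Section~\ref{sec:single-price-dc} core event that at most $10\log k+1$ bidders of any set exceed its threshold inside $T$, one argues: if $R$'s above-threshold bidders in $T$ contributed more than $L/8$, then, being at most $10\log k+1$ in number and summing to more than $L/8$, they would witness $r^*=\Omega(L/\log\log k)=\Omega(\opt/\log\log k)$ (via the standard $\text{sum}\le r^*\cdot O(\log(\#\text{bidders}))$ estimate), contradicting the case assumption for $C$ large; hence $\sum_{i\in R}\min(\vali,t_R)\ge v(R\cap T)-L/8\ge L/8$, so $R$'s low part is a sum of at least $2\log k$ pieces each of size at most $t_R\le\tfrac{1}{2\log k}\sum_{i\in R}\min(\vali,t_R)$. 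The calibration of the thresholds to $\Theta(\log k)$ then makes the sampled sub-sums of these pieces concentrate tightly enough to survive a union bound over the $k$ maximal sets, so with constant probability $\sum_{i\in R\cap U}\min(\vali,t_R)=\Omega\!\big(\sum_{i\in R}\min(\vali,t_R)\big)=\Omega(L)=\Omega(\opt)$, whence $Q=\Omega(\opt)$. In either regime, $\tfrac12 W+\tfrac12 Q\ge\opt/O(\log\log k)$.

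The main obstacle is the last step: the returned set $R$ is chosen to \emph{maximize} the sampled value $v(S\cap T)$, so it is adversarially correlated with the random sample, and one must rule out that $R$'s low-value mass collapses almost entirely into $T$. The quantitative heart is precisely why $\Theta(\log k)$ is the right threshold --- it guarantees the ``low part'' of every set is spread over $\Omega(\log k)$ comparably-sized pieces, so that a large-deviation bound on a sampled sub-sum has failure probability small enough to union-bound over all $k$ sets (the same phenomenon exploited in Lemma~\ref{lem:chernoff1}); pushing this through for the \emph{data-dependent} set $R$ rather than a fixed set is the delicate part, and I expect it to require conditioning on the core event and bounding the contribution on its complement separately. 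Everything else is bookkeeping with the decomposition $\opt\le 2\mathsf{LOW}+2\highcore+2\hightail$ in hand.
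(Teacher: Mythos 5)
Your proposal takes a genuinely more circuitous route than the paper's proof: you reduce to the Bayesian framework of Section~\ref{sec:single-price-dc} via a ``virtual'' product distribution $\dist$ in which each $\wali$ equals $\vali$ or $0$ with probability $1/2$ each (mirroring whether $i$ lands in $U$), and then re-run the $\mathsf{LOW}$/$\highcore$/$\hightail$ machinery over $\dist$. The paper's proof is more direct. It fixes the realized $\vals$, defines $\topS{S}$ as the top $60\log k$ bidders of each maximal set, notes that WFCA already gives an $O(\log\log k)$-approximation to $v(\topS{O})$ (any $m$-bidder set has welfare at most $O(\log m)\cdot r^*$, and $m=O(\log k)$ here), and in the complementary case $\max_S v(\topS{S})< v(O)/100$ invokes a single simultaneous concentration event over the sampling coins: with probability $1-o(1/k)$, \emph{every} $S\in\maxsets$ and \emph{every} $x\le\thresh{S}$ satisfies $|T\cap S(x,\vals)|\in[1/9,8/9]\cdot|S(x,\vals)|$ (Lemma~\ref{lem:sampled}, Corollary~\ref{cor:allsamples}). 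Because this holds for all sets at once, the data-dependent $S^*=\argmax_S v(S\cap T)$ is controlled exactly as a fixed set would be, and two lines of arithmetic give $v(S^*\cap U)=\Omega(v(O))$.

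That simultaneous-concentration step is the entire technical content of the theorem, and it is precisely what your write-up leaves open. You correctly flag it (``the quantitative heart,'' ``the delicate part'') and speculate that conditioning on a core event should handle the data-dependent $R$, but you do not prove it; since every other step of your argument (the $r^*$-dichotomy, bounding $\highcore$ by $r^*$, the $\mathsf{LOW}$ estimate for $R$) is bookkeeping once the concentration is available, this is a genuine gap rather than a missing detail. There is also a calibration problem: you set $t_S$ so that only $2\log k$ bidders of $S$ exceed it, i.e.\ expectation $\log k$ on each side of the sample. A constant-factor Chernoff deviation at mean $\log k$ gives failure probability only $k^{-c}$ for a small constant $c$, which is not $o(1/k^2)$, so the union bound over the geometrically-spaced sub-thresholds and over the $k$ sets does not close. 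This is exactly why the paper's $\thresh{S}$ is pegged at $60\log k$ bidders (mean $30\log k$ in each half of the sample): the larger constant is what pushes each per-interval failure probability below $1/k^2$ and lets the doubling-and-union-bound argument of Lemma~\ref{lem:sampled} go through. Your route would likely work with both pieces repaired, but as written the key lemma is missing and the constants do not support the claimed tail bound.
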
 
	
	\begin{proof}

We fix the vector of bidder values $\vals$.	
	For a set $S \in \maximalSets$ let $\thresh{S}$ be the threshold $p$ such that $\{|\{i \in S : v_i \geq p\}| = 60\log{k}\}$ and $\topS{S}$ denote the subset of bidders with value at least $\thresh{S}$ in $S$ (i.e., the $60\log{k}$ highest value bidders in $S$)\footnote{We assume consistent (lexicographical) tie-breaking in the event that there isn't a price for which exactly $60\log{k}$ bidders have value above the threshold.}. 
	Let $O$ denote the optimal set.  We then have that the WFCA obtains a $O(\log\log{k})$-approximation to the welfare in $\topS{O}$.  Since we run the WFCA with probability $1/2$ in the hedging auction, it just remains to show that the sampling auction obtains a $O(\log\log{k})$-approximation to $v(O) - v(\topS{O})$.  We can upper bound $v(S) - v(\topS{S})$ for any $S$ by using our $S(t, \vals)$ notation from Section \ref{sec:single-price-dc} as $v(S) - v(\topS{S}) = v(S \setminus \topS{S}) \leq \int_{0}^{\thresh{S}}{|S(x,\vals)|dx}$ since every bidder in $S \setminus \topS{S}$ has value no more than $\thresh{S}$ by definition.  To show that the sampling auction obtains a $O(\log\log{k})$-approximation to $\int_{0}^{\thresh{O}}{|O(x,\vals)|dx}$, we use the following lemma.  We defer the proof to Appendix \ref{sec:addlproofs} as it is similar to that of Lemma \ref{lem:chernoff1}. 
	
	\begin{lemma}\label{lem:sampled}
	    When running the sampling auction on any instance we have that for any $S \in \maximalSets$, \[\prob{\exists x \in [0, \thresh{S}] ~:~ |T \cap S(x, \vals)| \notin \left[1/9, 8/9\right]\cdot |S(x, \vals)|} = o(1/k^2).\]
	\end{lemma}
	By taking a union bound over the $k$ maximal sets in $\maximalSets$, we obtain the following corollary.
	\begin{corollary}\label{cor:allsamples}
	    When running the sampling auction on an instance with $k$ maximal sets in $\maximalSets$,
	    \[\prob{\forall S \in \maximalSets ~ \forall x \in [0, \thresh{S}] ~:~ |T \cap S(x, \vals)| \in [1/9, 8/9]\cdot |S(x, \vals)|} = 1 - o(1/k).\]
	\end{corollary}
	
	We can now complete the proof. 
	We first assume that $v(O)/100 \geq \max_S v(\topS{S})$ as otherwise the WFCA would give a $O(\log\log{k})$-approximation.  Let $S^* = \text{argmax}_{S \in \maximalSets}v(S \cap T)$.  We then have that the random sampling auction obtains welfare $v(S^* \cap U)$.  Since $S^*$ was the set of highest sampled welfare, with probability $1-o(1/k)$ the event in Corollary \ref{cor:allsamples} holds and we have that 
	\begin{align*}
	    v(S^* \cap T) \geq v(O \cap T)
	    \geq \int_0^{\thresh{O}}{|T \cap O(x,\vals)|dx}
	    \geq \int_0^{\thresh{O}}{\frac{1}{9}|O(x,\vals)|dx}
	    \geq \frac{1}{9} \cdot (v(O) - v(\topS{O})),
	\end{align*}
	where the third inequality is due to the lower bound on $|T \cap O(x,\vals)|$ from  Corollary \ref{cor:allsamples}. But then, since $v(\topS{O}) \leq v(O)/100$ we have that $v(S^* \cap T) \geq \frac{11v(O)}{100}$. We can bound $v(S^* \cap U)$ similarly by
	\begin{align*}
	    v(S^* \cap U)  &\geq \int_0^{\thresh{S^*}}{|U \cap S^*(x, \vals)|dx}
	    \geq \int_{0}^{\thresh{S^*}}{\frac{1}{9}|S^*(x, \vals)|dx}
	    \geq \frac{1}{9} \cdot (v(S^*) - v(\topS{S^*}))\\
	    &\geq \frac{1}{9} \cdot \left(v(S^*) - \frac{v(O)}{100}\right)
	    \geq \frac{1}{9} \cdot \left(v(S^* \cap T) - \frac{v(O)}{100}\right)
	    \geq \frac{1}{9} \cdot \left(\frac{11v(O)}{100} - \frac{v(O)}{100}\right)
	    \geq \frac{v(O)}{900},
	\end{align*}
	where the second inequality is due to the fact that $|U \cap S^*(x,\vals)| + |T \cap S^*(x, \vals)| = |S^*(x,\vals)|$ and the upper bound on $|T \cap S^*(x,\vals)|$ from  Corollary \ref{cor:allsamples}. Thus, when the WFCA does not give a $O(\log\log{k})$-approximation, the sampling auction gives a $O(1)$-approximation, as desired.
	\end{proof}

\section{Lower Bound}\label{sec:lowerbound}

To complement our positive results, in this section we show that no clock auction (single-price or not) can achieve better than a 29/27-approximation to the expected optimal welfare even under full information and independent values.
This is in contrast to the lower bound in \cite{CGS2022}, which applies only with respect to correlated priors.
Moreover, our bound applies in a very simple feasibility instance that is a special case of both $\dsets$ and $\ksystem$ (which is the simplest setting beyond matroid feasibility, for which an optimal solution can be obtained using a deterministic prior-free clock auction~\cite{MS2019}.)
This result demonstrates that in the process of discovering the high-valued bidders, it may be inevitable that the auction loses some low-valued bidders along the way.

\begin{theorem}\label{th:lowerbound}
For any constant $\epsilon>0$, no clock auction can obtain $29/27 - \epsilon$ approximation to the expected optimal welfare, even in simple $\dsets$ or $\ksystem$ feasibility constraints and even under full access to the independent value distributions.  
\end{theorem}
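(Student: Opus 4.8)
The plan is to pin down a single fixed three-bidder instance on which \emph{every} clock auction provably loses a $29/27$ factor, and whose numerical parameters are tuned so that this bound is tight. Take $N=\{1,2,3\}$ with a feasibility constraint whose maximal feasible sets are exactly $S_1=\{1\}$ and $S_2=\{2,3\}$; this is a $\dsets$ instance with groups $\{1\}$ and $\{2,3\}$ and, simultaneously, a $\ksystem$ instance with demands $c_1=1$, $c_2=c_3=\tfrac12$. Let $v_1=5$ be deterministic and let $v_2,v_3$ be i.i.d.\ with $\Pr[v_i=5]=\tfrac13$ and $\Pr[v_i=2]=\tfrac23$; all three values are independent, so this is a legitimate Bayesian instance with full information. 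Enumerating the four realizations of $(v_2,v_3)$ gives $\opt=\Ex{\max\{v_1,v_2+v_3\}}=\tfrac19\cdot 10+\tfrac49\cdot 7+\tfrac49\cdot 5=\tfrac{58}{9}$.

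The core of the argument is to show that no clock auction obtains expected welfare more than $6$ on this instance, which yields the ratio $\opt/6=\tfrac{29}{27}$. The first step is a structural reduction. Since $v_1$ is public, raising bidder $1$'s clock to any level at most $5$ does nothing and raising it above $5$ deterministically drops bidder $1$, so bidder $1$'s clock is merely a ``kill switch.'' Since $v_2,v_3$ are supported on $\{2,5\}$, the only informative action on bidder $i\in\{2,3\}$ is a \emph{probe}: raising $i$'s clock into $(2,5]$, which reveals $v_i=5$ if $i$ survives and drops $i$ if $v_i=2$; re-probing a surviving bidder, or stopping a clock strictly inside $(0,2]$, can be simulated by doing nothing. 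Moreover, a feasible active set is either $\{1\}$ or a subset of $\{2,3\}$, so the auction can serve a set containing both bidders $2$ and $3$ only if \emph{neither} of them was ever dropped. Hence, up to an irrelevant relabeling of payments, a clock auction here is an adaptive policy that probes some subset of $\{2,3\}$ (in some order, with the decision to probe the second conditioned on the first response) and then serves the best feasible set among the surviving bidders.

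It then suffices to case-analyze on which of bidders $2,3$ (if any) is probed first. If neither is probed, the best option is to serve $\{2,3\}$, with expected welfare $\Ex{v_2+v_3}=6$ (serving $\{1\}$ yields only $5$). If bidder $2$ is probed first: conditioned on survival (probability $\tfrac13$, i.e.\ $v_2=5$) it is optimal \emph{not} to probe bidder $3$ and to serve $\{2,3\}$ for expected welfare $5+\Ex{v_3}=8$, since probing bidder $3$ here yields only $\tfrac13\cdot 10+\tfrac23\cdot 5=\tfrac{20}{3}<8$; conditioned on bidder $2$ being dropped (probability $\tfrac23$) every feasible active set has value at most $5$, so serving $\{1\}$ is optimal with welfare $5$. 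This totals $\tfrac13\cdot 8+\tfrac23\cdot 5=6$, and the case ``bidder $3$ probed first'' is symmetric. Thus every clock auction has expected welfare at most $6$, so $\opt\le\tfrac{29}{27}\cdot\auc$ always, and hence for every constant $\epsilon>0$ no clock auction is a $(\tfrac{29}{27}-\epsilon)$-approximation.

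The main obstacle I expect is making the structural reduction of the second paragraph airtight against an arbitrary clock auction — one with arbitrarily many rounds, arbitrarily small increments, and fully adaptive price trajectories. Concretely, one must argue (i) that partial clock increments inside $(0,2]$, or an extra probe of a bidder already known to have value $5$, can be deleted without changing the distribution of outcomes; (ii) that the only way to end with active set $\{2,3\}$ is for both bidders to have survived every price they faced — this is precisely the ``to discover the high value you may have to destroy the low value'' phenomenon the theorem is about; and (iii) a consistent tie-breaking convention for the measure-zero event that a clock lands exactly on a value in the support. Given this reduction the numerical case analysis is routine; separately, one verifies that the parameters $v_1=5$, success probability $\tfrac13$, and support $\{2,5\}$ are exactly the choice that equalizes the two best strategies — ``serve $S_2$ blindly'' and ``probe, then serve'' — at welfare $6$, which is what makes $29/27$ the tight ratio for this construction.
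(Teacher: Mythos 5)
Your proposal is correct and is essentially the paper's own argument: your instance is the paper's lower-bound instance scaled by a factor of $5$ (a deterministic bidder against two i.i.d.\ bidders with a low value accepted blindly versus a high value revealed only by a destructive probe), and your case analysis comparing ``serve the pair blindly'' with ``probe then fall back'' matches the paper's computation of $\opt = 58/45$ versus a best achievable welfare of $54/45$, giving the same $29/27$ ratio. The only difference is presentational: you spell out the reduction of an arbitrary adaptive clock auction to probe strategies more explicitly than the paper does.
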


\begin{proof}
Consider an instance with two disjoint maximal feasible sets $S$ and $T$. $S$ consists of a single bidder with deterministic value $1$. $T$ consists of two bidders, each has value $2/5$ with probability $p = 2/3$ and value $1$ with probability $(1 - p) = 1/3$, independently.  
The expected optimal welfare is $1 \cdot p^2 + 7/5 \cdot 2p(1-p) + 2 \cdot (1-p)^2 = 4/9 + 28/45 + 2/9 = 58/45$.  
We show that no clock auction can achieve a higher expected welfare than $54/45$, concluding the assertion of the theorem.

A clock auction that always serves the bidder in $S$ can only obtain value $1$.  
A clock auction that always serves the bidders in $T$ without increasing either clock above $2/5$ obtains value $4/5 \cdot p^2 + 7/5 \cdot 2p(1-p) + 2 \cdot (1-p)^2 = 16/45 + 28/45 + 2/9 = 54/45$.  
It remains to consider the case where the clock auction increases the clock of at least one bidder in $T$ above $2/5$. 
Fix one of the bidders in $T$ and suppose its clock increases above $2/5$. 
If its value is $2/5$ (this happens with probability $p$), then the auction loses that bidder and can obtain value of at most 1. 
If its value is 1 (this happens with probability $1-p$), the auction obtains value 1 from this bidder, and an expected value $p \cdot 2/5 + (1-p) \cdot 1$ from the other bidder. 
Together, the auction obtains an expected welfare of at most $p \cdot 1 + (1-p)\left(1 + p \cdot 2/5 + (1-p) \cdot 1 \right) = 54/45$, as desired.
\end{proof}

\section{A Hierarchy of Clock Auctions and Additional Results}\label{sec:hierarchy}
Our positive results from the previous sections using simple clock auctions provide hope that clock auctions with prior information can achieve even stronger performance guarantees.
To better understand their power and limitations, we propose the study of these auctions through a hierarchy of gradually increasing power. 

In the figure below, we provide a sketch of this hierarchy among classes of clock auctions, with the arrows pointing from less general to more general classes. The class of \emph{adaptive single-price clock auctions} includes all mechanisms that can combine the strengths of posted-prices mechanisms and single-price clock auctions: an adaptive single-price clock auction can offer a price to each buyer in any desired order, and the price offered to each buyer can adapt to the responses provided by other buyers, earlier in the ordering. Furthermore, the decision regarding which subset of buyers should be served can be deferred until all buyers have responded to their offers.

\begin{center}
	\begin{tikzpicture}
	\node[align=center] (A) at (0,0) {Single-Price \\ Clock Auctions};
	\node[align=center] (B) at (0,1.2) {Posted-Price \\ Mechanisms};
	\node[align=center] (C) at (5,0.6) {Adaptive Single-Price \\ Clock Auctions};
	\node[align=center] (D) at (10,0.6) {Multi-Round \\ Clock Auctions};
	\draw [->] (1.25,0) -- (3.1,0.5);
	\draw [->] (1.25,1) -- (3.1,0.7);
	\draw [->] (C) edge (D);
	\end{tikzpicture}
\end{center}

In the remainder of this section, we first present some initial results and observations for \caucts\ (\S \ref{subsec:SQCA}). We then propose the refinement of multi-round clock auctions based on the number of distinct prices offered to each bidder, or the extent to which the choice of the prices can adapt to previous bidder responses (\S \ref{subsec:genBCA}).

\subsection{Single-Price Clock Auctions}\label{subsec:SQCA}

\vspace{0.05in}
\noindent {\bf Binary valuations.}
A class of instances that emphasizes the benefits of single-price clock auctions over posted-price mechanisms is that of binary valuations, i.e., where the value of each bidder can be either 0 or 1. Although this class is seemingly simple, it actually poses some of the most significant obstacles for posted-price mechanisms. Many of the lower bounds in the prophet inequality literature come from this class, and even for the special case of \dsets\ with $n$ binary value bidders, one cannot obtain a better than $O(\log n/\log\log n)$ approximation \cite{KW2012, R2016}. This is in stark contrast to single-price clock auctions, which can trivially achieve the optimal social welfare:
\begin{observation}
For the special case of binary valuations (i.e., $v_i\in\{0,1\}$ for all $i\in N$), every downward-closed set system admits a single-price clock auction that achieves $\opt$. 
\end{observation}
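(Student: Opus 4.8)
The plan is to exhibit a single-price clock auction that, by offering one well-chosen uniform price, learns exactly which bidders have value $1$ and then serves an optimal feasible subset of them. Concretely, I would instantiate the single-price template with $\pricei = 1/2$ for every bidder $i$ (any fixed price in the open interval $(0,1)$ works, and note this choice requires no prior information at all). The first step of the argument is to pin down the active set: since it is an obviously dominant strategy in a clock auction for bidder $i$ to remain active precisely while the offered price does not exceed $\vali$, and since $\vali \in \{0,1\}$ with $\pricei = 1/2 \in (0,1)$, every bidder with $\vali = 1$ stays active and every bidder with $\vali = 0$ drops out, with no boundary ties. Hence the set of active bidders is exactly $W \eqdef \{ i \in N : \vali = 1 \}$.

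The second step is the selection rule: among the active bidders I would let $F$ be a feasible subset of $W$ of maximum cardinality, which exists because $\emptyset \in \feasible$; the auction then serves $F$ (at price $1/2$ each). The third step is to verify optimality realization by realization. Fix a value vector $\vals$. The realized social welfare of the auction is $\sum_{i \in F} \vali = |F|$, because every bidder in $W$ (hence in $F$) has value $1$. On the other hand, for any $F' \in \feasible$ we have $\sum_{i \in F'} \vali = |F' \cap W|$, and $F' \cap W$ is itself feasible since $\feasible$ is downward-closed; by the maximality of $F$ this gives $|F' \cap W| \le |F|$. Therefore $|F| = \max_{F' \in \feasible} \sum_{i \in F'} \vali$, i.e., the auction's welfare equals the realized optimal welfare. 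Taking expectations over $\vals \sim \vect{\dist}$ then yields that the auction achieves $\opt$ exactly.

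I do not expect a genuine obstacle here. The only point that requires a moment's care is the tie at the threshold value, which I sidestep by picking a price strictly between $0$ and $1$; the one place downward-closedness is actually used is in the claim that $F' \cap W$ is feasible (without it the statement would be false, as a value-$1$ bidder could be ``trapped'' only in infeasible sets), but this is exactly the standing assumption of the paper. Everything else is immediate.
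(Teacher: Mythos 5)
Your proof is correct and takes essentially the same approach as the paper: post a uniform price in $(0,1)$ to reveal the value-$1$ bidders, then serve the largest feasible subset of those who accept. You simply spell out the downward-closedness step that the paper leaves implicit.
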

Indeed, the auction can just set a price of, say, $1/2$ for all bidders\footnote{Clearly, any price $p\in (0, 1)$ would achieve the same result.}, at which point the bidders whose value is 0 will drop out, revealing the bidders with value 1. Then, the auction can choose to serve the largest feasible subset of the remaining bidders, leading to the optimal social welfare. 

\vspace{0.1in}
\noindent {\bf Distributions with bounded support size.}
A more general class of instances where a single-price clock auction performs well is instances with bounded support size. Specifically, if the support size $\supp(\disti)$ is at most $\ell$ for every $i\in N$, then we can achieve an approximation of $\ell$ (proof deferred to  Appendix~\ref{sec:proof_types}). \begin{restatable}{claim}{constantSupport}
Suppose $|\supp(\disti)|\le\ell$ for all $i\in N$, then there is a single-price clock auction using the full distributional prior $\dist$ that obtains $\ell$ approximation to the expected optimal social welfare.
\end{restatable}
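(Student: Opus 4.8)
The plan is to generalize the binary-valuation argument. For binary values, a single price in $(0,1)$ perfectly separates the two value levels, and then deferred acceptance picks the optimal feasible subset. With support size at most $\ell$, we cannot separate all levels with one price, but we can afford to "guess" one of at most $\ell$ relevant price levels and lose only a factor of $\ell$.

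First I would set up notation: for each bidder $i$, write its support as $0 \le a_{i,1} < a_{i,2} < \dots < a_{i,\ell_i}$ with $\ell_i \le \ell$ (padding with dummy levels if $\ell_i < \ell$ so that every bidder has exactly $\ell$ indexed levels, or more carefully, pool the global set of support values). The cleaner route is: let $V = \bigcup_i \supp(\disti)$ be the set of all values appearing with positive probability across all bidders, partitioned into "layers" — but since $|V|$ may be large, instead I would argue per-bidder. For a threshold $t$, recall $S(t,\vals) = \{i \in S : \vali > t\}$. A single-price clock auction at price $p$ serves, for each realization, the maximum-weight feasible subset of $\{i : \vali \ge p\}$, charging each served bidder $p$; its welfare is at least $p \cdot \Ex{\max_{S \in \maxsets} |S(p^-, \vals)|}$ where $|S(p^-,\vals)| = |\{i \in S : \vali \ge p\}|$ — actually its welfare is $\Ex{\max_S \sum_{i \in S, \vali \ge p} \vali} \ge \Ex{\max_S \sum_{i \in S, \vali \ge p} p}$, and we only need the welfare form.

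The key step: decompose $\opt$ by value level. Fix the optimal set $S^*(\vals) \in \argmax_S \sum_{i \in S} \vali$ for each realization. For a bidder $i$ with support values $a_{i,1} < \dots < a_{i,\ell_i}$, write $\vali = \sum_{j} (a_{i,j} - a_{i,j-1}) \cdot \ind{\vali \ge a_{i,j}}$ (telescoping, with $a_{i,0} := 0$). This expresses $\opt = \Ex{\max_S \sum_{i \in S} \vali}$ as a sum over "global layer thresholds". The cleanest version: let $q_1 < q_2 < \dots$ enumerate $V$; then $\vali = \sum_r (q_r - q_{r-1}) \ind{\vali \ge q_r}$, and $\opt \le \sum_r (q_r - q_{r-1}) \Ex{\max_S |\{i \in S : \vali \ge q_r\}|}$... but this has too many terms. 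The fix that gives exactly $\ell$: partition by *rank within each bidder's support*, not by global value. I would instead argue: for each realization, $\vali = a_{i, k_i(\vali)}$ where $k_i$ is the rank, and bound $\sum_{i \in S^*} \vali \le \sum_{i \in S^*} \sum_{j=1}^{\ell} (a_{i,j} - a_{i,j-1})\ind{\vali \ge a_{i,j}}$; swapping max and sum over $j$ loses a factor $\ell$: $\opt \le \ell \cdot \max_{j \in [\ell]} \Ex{\max_S \sum_{i \in S}(a_{i,j}-a_{i,j-1})\ind{\vali \ge a_{i,j}}}$. Now for the best such $j^*$, I would run the single-price clock auction whose price offered to bidder $i$ is $a_{i,j^*}$ — wait, that is *not* a single uniform price. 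Here is where I must be careful: the template allows a *vector* of prices $\vecp$, computed from public info (the distributions), so $\pricei = a_{i,j^*}$ is allowed. At this price, bidder $i$ stays active iff $\vali \ge a_{i,j^*}$, and the auction serves $\argmax_S \sum_{i \in S : \vali \ge a_{i,j^*}} a_{i,j^*} \ge \argmax_S \sum_{i \in S : \vali \ge a_{i,j^*}} (a_{i,j^*} - a_{i,j^*-1})$... matching the layer-$j^*$ benchmark up to the telescoping increments. This gives welfare $\ge \Ex{\max_S \sum_{i\in S}(a_{i,j^*}-a_{i,j^*-1})\ind{\vali\ge a_{i,j^*}}} \ge \opt/\ell$.

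The main obstacle I anticipate is getting the telescoping-and-swap bookkeeping exactly right so the factor is genuinely $\ell$ and not, say, $\ell^2$ or $\sum_i \ell_i$ — in particular ensuring the increments $(a_{i,j}-a_{i,j-1})$ align across bidders when we take the single max over $j$ (they need not: bidder $i$'s $j$-th support value is unrelated to bidder $i'$'s). The resolution is that we are free to choose the *per-bidder* price $a_{i,j_i}$ with possibly different ranks $j_i$ for different bidders, so the relevant quantity is $\max$ over a *single* rank $j$ applied uniformly as a rank (not a value) to all bidders; since each bidder has at most $\ell$ ranks this is a max over $\le \ell$ choices, and the union/averaging over $S \in \maxsets$ is already absorbed by deferred acceptance, not costing an extra factor of $k$. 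So the argument should go through cleanly at cost exactly $\ell$; I would double-check the edge case where some bidder has $\vali = 0$ in its support (handled by $a_{i,0}=0$ contributing nothing) and the case $|\supp(\disti)| < \ell$ (pad with repeated top value so extra ranks have zero increment).
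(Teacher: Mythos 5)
Your proof is correct and takes essentially the same approach as the paper's: decompose the optimal welfare by rank within each bidder's support into at most $\ell$ layers, swap max and sum to lose a factor of $\ell$, and for the best rank $j^*$ post to each bidder $i$ the (non-uniform) price equal to its rank-$j^*$ support value, lower-bounding the auction's welfare by its revenue on that layer. The only cosmetic difference is that the paper uses the exact-match decomposition $\vali=\sum_{j}\vali\cdot\ind{\vali=\theta_{i,j}}$ rather than your telescoping identity $\vali=\sum_j(a_{i,j}-a_{i,j-1})\ind{\vali\ge a_{i,j}}$.
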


\vspace{0.1in}
\noindent {\bf Connections to large deviation theory.}
In cases beyond constant support size and binary valuations, the question of whether single-price clock auctions can achieve a constant approximation is quite non-trivial. In fact, the problem remains non-trivial even if we restrict attention to the very special case of $\dsets$ with equal size maximal sets (i.e., $k$ disjoint maximal sets of $m=n/k$ bidders each) and i.i.d.\ bidders (i.e., $\disti=\distone$ for all $i\in N$). Interestingly, this question appears to be  related to the {\em large deviation theory} \cite{den2008large}.

For instances where all $k$ maximal disjoint sets have the same size $m$, and all $n=km$ bidders have their $\vali$ values drawn i.i.d., 
the expected optimal welfare is equal to the maximum taken over $k$ independent samples of $\xi(m)=\sum_{i=1}^{m}\vali$. 
We are interested in the case where this maximum is significantly larger than the expectation of $\Ex{\xi(m)}$ which only happens for large $k =\Theta(2^m)$. In this case, the expectation of this maximum is roughly equal to the value of the top $k$-th quantile of $\xi(m)$. 
Note that in the limit when $k =\Theta(2^m)$ and $m\to\infty$, the tail probability for the distribution of $\xi(m)$ is captured by Cramer's theorem (a fundamental result in large deviation theory). According to this theorem, there is a certain rate function $I$ such that $\xi(m)$ satisfies a \emph{large deviation principle with rate function $I$}, meaning that 
$\prob{\frac{\xi(m)}{m}\ge x}=e^{-I(x)\cdot m\cdot (1+o(1))}$.

On the other hand, in a single-price clock auction with price vector $\prices$, the agents with $\vali<\pricei$ drop out, so the event that $\vali\sim\disti$ lies in the ``tail'' of $\disti$ ($\vali\ge\pricei$) is instead captured by the Bernoulli random variable $\valhi=\ind{\vali\ge\pricei}\cdot\Ex{\vali~\vert~ \vali\ge\pricei}$. Given the set of bidders that accepted the prices offered to them, the obvious choice is then to serve the feasible subset with the largest sum of prices, leading to an expected social welfare of $\max_{S\in\maxsets}\sum_{i\in S}\valhi$. 

We can think of the variable $\valhi$ as a {\em binary approximation} of the random variable $\vali$ up to an additive error\footnote{As a single-price clock auction with $0$ prices achieves in expectation the welfare of $m\cdot\Ex{\vali}$, we can safely ignore the additive terms of order $O(\Ex{\vali})$ for each bidder $i$.} of $O(\Ex{\vali})$. Formally, a {\em binary approximation} of a random variable $\xi_i:\Omega_i\to\reals$ on a probability space $\Omega_i$ is the following (generalized) Bernoulli random variable $\oxi_i$ defined for an event $\ev_i\subset\Omega_i$.
\[
\oxi_i:\quad \Ex{\xi_i ~\vert~ \ev_i} ~\text{w.p.}~\prob{\ev_i}\quad,\quad
\Ex{\xi_i~ \vert ~\overline{\ev}_i} ~\text{w.p.}~\prob{\overline{\ev}_i}
\]
Then, to achieve $O(1)$ approximation single-price clock auctions, we need to cover the  tail probabilities of $\frac{1}{m}\sum_{i=1}^m\vali$ in the large deviation regimes by the constantly related tail probabilities of $\frac{1}{m}\sum_{i=1}^m\valhi$. In the language of large deviation theory, we need to cover the tail probability of the sum of i.i.d. random variables by the tail probability of the sum of binary approximations.
This is captured in the following claim.
\begin{claim}
\label{cl:large_deviations}
Assume that there is a single-price clock auction for the $\dsets$ with $k$ sets of size $m$ and $n=k\cdot m$ i.i.d. bidders that is $O(1)$ approximation $\forall k,m\in\natural$. 
Then for any collection $(\xi_i)_{i=1}^{m}$ of $m$ non-negative i.i.d. random variables and for every $x > O(\Ex{\xi_i})$ there must exist binary approximations $(\oxi_i)_{i=1}^{m}$ of $(\xi_i)_{i=1}^{m}$ such that 
\[
\prob{\frac{\sum_{i=1}^m\oxi_i}{m}\ge x\cdot\Omega(1)}\ge \prob{\frac{\sum_{i=1}^m\xi_i}{m}\ge x}.
\] 
\end{claim}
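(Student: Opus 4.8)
The plan is to turn an $O(1)$-approximate single-price clock auction for \dsets\ into the asserted tail comparison by instantiating the \dsets\ family at a carefully chosen scale. Fix a collection $(\xi_i)_{i=1}^m$ of non-negative i.i.d.\ random variables, write $\mu\eqdef\Ex{\xi_1}$ and fix $x>C\mu$ for a universal constant $C$ to be chosen later, and set $q\eqdef\prob{\tfrac1m\sum_{i=1}^m\xi_i\ge x}$; if $q=0$ there is nothing to prove, so assume $q>0$ and put $k\eqdef\ceil{1/q}$. Consider the \dsets\ instance with maximal sets $S_1,\dots,S_k$, each of size $m$, in which all $km$ bidder values are drawn i.i.d.\ from the law of $\xi_1$. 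The per-set sums $\Sigma_j\eqdef\sum_{i\in S_j}\vali$ are then independent copies of $\sum_{i=1}^m\xi_i$, so $\prob{\Sigma_j\ge xm}=q$ and, by independence together with $kq\ge1$, $\prob{\max_j\Sigma_j\ge xm}=1-(1-q)^k\ge1-e^{-1}$. Since $\max_j\Sigma_j\ge0$, this yields $\opt=\Ex{\max_j\Sigma_j}\ge(1-e^{-1})\,xm$.

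Next I would express the welfare of a single-price clock auction on this instance in the language of binary approximations. Let $\mech$ be the hypothesized $O(1)$-approximate single-price clock auction, offering bidder $i$ a price $\pricei$. Once every bidder has responded, $\mech$ knows, for each $i$, whether $\vali\ge\pricei$; since it also knows the distributions, its conditional expected welfare from serving the active bidders of a maximal set $S_j$ equals $\sum_{i\in S_j}\valhi$, where $\valhi\eqdef\ind{\vali\ge\pricei}\cdot\Ex{\vali\mid\vali\ge\pricei}$. Hence the (weakly better) auction that serves $\arg\max_j\sum_{i\in S_j}\valhi$ is still $O(1)$-approximate, giving $\Ex{\max_{j\in[k]}\sum_{i\in S_j}\valhi}\ge\opt/O(1)\ge\Omega(1)\cdot xm$. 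Now enlarge each $\valhi$ to the genuine binary approximation $\oxi_i\eqdef\valhi+\ind{\vali<\pricei}\cdot\Ex{\vali\mid\vali<\pricei}$ of $\vali$ with respect to the event $\{\vali\ge\pricei\}$ (exactly the notion displayed just before the claim); note $\oxi_i\ge\valhi\ge0$ and $\Ex{\oxi_i}=\mu$. Writing $Y_j\eqdef\sum_{i\in S_j}\oxi_i$, we obtain $k$ independent non-negative random variables with $\Ex{Y_j}=m\mu$ and, since $Y_j\ge\sum_{i\in S_j}\valhi$ pointwise, $\Ex{\max_j Y_j}\ge\Omega(1)\cdot xm$. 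Picking the constant $C$ large makes each mean $\Ex{Y_j}=m\mu\le xm/C$ negligible relative to $xm$, so this large expected maximum cannot come from the means alone.

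The remaining step, and the main obstacle, is to extract a single index $j^\ast$ for which $Y_{j^\ast}$ by itself exceeds $\Omega(xm)$ with probability $\Omega(q)$; reindexing $S_{j^\ast}$ as $\{1,\dots,m\}$, the corresponding $\oxi_i$'s are then binary approximations of $(\xi_i)_{i=1}^m$ with $\prob{\tfrac1m\sum_i\oxi_i\ge\Omega(x)}\ge\Omega(q)=\Omega(1)\cdot\prob{\tfrac1m\sum_i\xi_i\ge x}$, which is the claim up to the universal constants absorbed in the $\Omega$'s. Starting from $\Ex{\max_j Y_j}\le t+\sum_j\Ex{(Y_j-t)^+}$ with $t$ a small multiple of $xm$, a pigeonhole over the $k\le2/q$ sets gives a $j^\ast$ with $\Ex{(Y_{j^\ast}-t)^+}\ge\Omega(xm/k)=\Omega(q\,xm)$. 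Passing from this to $\prob{Y_{j^\ast}\ge\Omega(xm)}\ge\Omega(q)$ is precisely a large-deviation statement: one must rule out that the mass of $(Y_{j^\ast}-t)^+$ is carried by realizations far larger than $xm$ yet far rarer than $q$. The route I would take is to split each $\oxi_i$ into its "big-value" part (the event that it exceeds a small multiple of $xm$) and its bounded "bulk" part, bounding the bulk contribution to $Y_{j^\ast}$ by a Chernoff bound (its summands are independent, bounded, and of total mean $\le xm/C$) and the big-value contribution via the assumed upper bound $\prob{Y_{j^\ast}\ge\Omega(xm)}<\Omega(q)$, after first truncating the law of $\xi_1$ at a high threshold so that tails are integrable — truncation is harmless, since a binary approximation of the truncated variable is stochastically dominated by one of $\xi_1$, so the conclusion transfers back up. Arranging that these two contributions, summed over the $k\approx1/q$ sets, stay $o(xm)$ and hence contradict $\Ex{\max_j Y_j}\ge\Omega(xm)$ unless some set already has the required lower tail, is the delicate accounting, and it is exactly the place where Cramér-type estimates enter, in line with the connection the surrounding discussion draws.
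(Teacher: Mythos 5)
Your first half faithfully formalizes the reduction that the paper only sketches (the paper states this claim without a formal proof, as a distillation of the preceding discussion): instantiating \dsets\ with $k=\ceil{1/q}$ sets of size $m$ where $q=\prob{\frac1m\sum_i\xi_i\ge x}$, getting $\opt\ge(1-e^{-1})xm$, upper-bounding the auction's welfare by $\Ex{\max_j\sum_{i\in S_j}\valhi}$ via conditioning on the acceptance pattern, and passing to the dominating binary approximations $\oxi_i$ induced by the offered prices. Up to this point the argument is correct and is exactly the route the paper has in mind.

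The genuine gap is the step you yourself flag as ``the delicate accounting'': converting $\Ex{\max_j Y_j}\ge\Omega(xm)$, where $Y_j=\sum_{i\in S_j}\oxi_i$, into a tail bound $\prob{Y_{j^*}\ge\Omega(xm)}\ge q$ for a single set. The facts you have available --- independence, $\Ex{\oxi_i}=\mu$, and the two-valued structure --- do not by themselves exclude the possibility that the expected maximum is carried by per-set events of probability far below $q$: a bidder offered a price deep in the tail contributes a value $\Ex{\xi\mid\xi\ge p_i}$ that can be arbitrarily large while its probability is arbitrarily small, and in the large-deviations regime where $q\ll\mu/(xm)$ (the regime the claim is actually about, with $q$ exponentially small in $m$) the aggregate mean budget $km\mu\approx m\mu/q$ vastly exceeds $xm$, so such rare-huge events can account for all of $\Ex{\max_j Y_j}$. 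Your proposed fix does not close this: assuming $\prob{Y_j\ge\Omega(xm)}<\Omega(q)$ for all $j$ bounds the probability at a single level but leaves $\Ex{(Y_j-t)^+}$ uncontrolled through the far upper tail, and truncating $\xi$ at $\Theta(xm)$ (which does usefully preserve the tail event) only caps individual summands, leaving $Y_j$ ranging up to $\Theta(xm\cdot m)$, so the pigeonholed bound $\Ex{(Y_{j^*}-t)^+}\ge\Omega(qxm)$ still does not yield the desired tail without a further multi-level tail-control argument; the natural union-bound accounting you gesture at loses a factor of $m$. Closing this step plausibly requires using the hypothesis at other scales of $k$ (probing rarer quantiles), which your construction, fixed at $k=\ceil{1/q}$, never does. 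Finally, even if your sketch were completed, it delivers $\prob{\frac1m\sum_i\oxi_i\ge\Omega(x)}\ge\Omega(1)\cdot q$, which is weaker than the stated inequality, whose right-hand side is $q$ itself with no constant degradation; only the threshold, not the probability, carries an $\Omega(1)$ in the claim.
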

We note that the binary approximations $\oxi_i$ in Claim~\ref{cl:large_deviations} may not be identical. If, on the other hand, we could achieve the constant approximation using a uniform single-price clock auction (which is reasonable to expect in the i.i.d.\ setting), then this would imply a stronger Claim~\ref{cl:large_deviations} 
with identical binary approximations $(\oxi_i)_{i=1}^{m}$.

\subsection{Multi-Round Clock Auctions} \label{subsec:genBCA}
As we demonstrate in Sections \ref{sec:dc} and \ref{sec:rand-clock}, moving beyond single-price clock auctions opens up an even greater set of design options.  Gradually increasing prices over time allows a clock auction to more carefully discover the preferences of the bidders.  In particular, in Section \ref{sec:dc}, we leverage multiple rounds of uniform prices to reduce the prior information required, and, in Section \ref{sec:rand-clock}, we leverage randomization and multiple rounds of non-uniform prices to eschew the need for prior information altogether.  Examining the trade-off between auction complexity versus prior information access is a fundamental question that follows naturally from our work.

An interesting example where multiple rounds of prices are clearly more powerful than a single price with distributional information is the case of selling a single item: the well-known second-price auction is implementable as an ascending price clock auction (with uniform prices for all bidders as in our auction from Section \ref{sec:dc}).  In this case, even without prior information, a multi-round clock auction obtains the optimal welfare.  On the other hand, a single-price clock auction with access to full prior information cannot obtain the optimal welfare in the single item setting even if bidders' values are drawn from i.i.d. distributions.  The relative power of offering multiple rounds of prices versus distributional information also then seems to depend on the feasibility constraint in question, suggesting a rich and fundamental line of further research.

The multi-round clock auctions we describe in this work use prior information in a very minimal way (or use no information at all). General multi-round clock auctions with full distributional information, however, can follow highly complicated price trajectories which leverage this information without sacrificing any of their appealing properties. One could then use this price discovery process to update the posterior beliefs regarding bidder values over time to possibly obtain even better performance as a function of the underlying ``complexity'' of the clock auction itself.

We believe that an interesting way to gradually evaluate the power of more complicated clock auctions is through an even more refined hierarchy depending on (i) the number of prices offered to each bidder, and (ii) whether or not they can adapt these queries to the responses of the bidders.  The class of clock auctions maintains its appealing properties no matter how many prices they offer to the bidders, yet, as we reflect in this work, increasing the number of prices comes at a cost.  Minimizing the number of prices offered to each bidder reduces the burden on them and speeds up the auction which may be desirable from an implementation perspective, further motivating a deeper analysis of the hierarchy we propose.

\section{Conclusions and Future Work}
In this paper, we achieve significantly improved social welfare approximation guarantees using Bayesian and randomized clock auctions; our results suggest many directions for future work. Most notably, understanding the best approximation guarantees that can be obtained by deterministic Bayesian clock auctions or randomized prior-free clock auctions under various feasibility constraints is an open line of research.  Even for the conceptually simple $\dsets$ setting, there is an asymptotic gap between the best known upper and lower bounds.  A particularly exciting question is whether or not clock auctions can achieve constant approximation guarantees for all downward-closed settings, via access to priors or randomization.  

Another direction for future research is toward a refined understanding of the impact that limited prior information has on the performance of clock auctions. For example, how would the results change if the auctioneer has only a limited number of samples from the value distribution of each bidder?
This direction has been studied in the prophet inequality literature (see, e.g.,~\cite{AKW19,correa2019prophet,RWW2020}).

Finally, it would be interesting to gain a better understanding of the hierarchy of clock auctions discussed in Section~\ref{sec:hierarchy}. 
For example, is there an asymptotic separation between the best approximation that can be guaranteed by general clock auctions with full distributional priors and single-price clock auctions (with the same distributional information) in various settings? Moreover, one can define a hierarchy of clock auctions based on the number of different prices they are allowed to offer to each bidder. Results in this paper are obtained at the two extremes: single-price clock auctions or clock auctions that use an arbitrary number of rounds. What is the power of clock auctions that are allowed to offer up to $q>1$ prices to each bidder?

\newpage
\appendix
\section{Omitted Proofs}\label{sec:addlproofs}
\subsection{Proof of Lemma \ref{lem:singlefixedprice} and Corollary \ref{lem:fixedprice}}
We note that the proofs of Lemma \ref{lem:singlefixedprice} and Corollary \ref{lem:fixedprice} are the same.  We thus provide one unified proof for both below.
\begin{proof}
Let $m=10\log k+1$ and $\Delta=\Ex{\max_{S\in\maxsets}\InParentheses{\sum_{i\in S}\valhi \cdot\ind{|S(t_S,\vals)|\le m}}}$.
Consider an event that there exists a bidder with value greater than or equal to $2\Delta$.  
By posting the price of $2\Delta$ for every bidder, 
we identify all such bidders and would serve at least one if they exist.  
Following the same reasoning as in \cite{R2016}, let $q_i$ be the probability that bidder $i$ has value $\valhi \ge 2\Delta$.  Then 
\[
\Delta \geq 2 \cdot \Delta \cdot \prob[\vals]{\exists i: \valhi \ge 2\Delta} = 2\cdot \Delta \cdot \left(1 - \prod_{i}(1 - q_i)\right).
\] 
Rearranging gives $\frac{1}{2} \leq \prod_{i}(1 - q_i) \leq e^{-\sum_{i}{q_i}}$ but then $\sum_{i}{q_i} \leq \ln{2}.$  The probability that we would want to accept a bidder and cannot (because there is some other bidder in another feasible set we are taking) is less than $\ln{2}$ so we must obtain at least a $(1 - \ln{2})$-fraction of the contribution from bidders with value above $2 \cdot \Delta$ by serving at most one such bidder.

Consider how much the small bidders with value less than $2\cdot \Delta$ contribute to the benchmark. The approximation guarantee is achieved by the revenue (not the welfare) of a \cauct.  Observe that the contribution to the benchmark of bidders with $\valhi\le\frac{\Delta}{2m}$ is at most $\frac{\Delta}{2}$ so by ignoring all such bidders we lose at most a constant factor against the benchmark.  Consider partitioning all remaining bidders into $\log{m} + 2$  buckets $\left[\frac{\Delta}{2m}, \frac{\Delta}{m}\right], \left[\frac{\Delta}{m}, \frac{2\Delta}{m}\right], \cdots, \left[\Delta, 2\Delta\right]$ 
based on their value.  We can decompose the remaining half of the benchmark (among bidders with value less than $2 \cdot \Delta$) by upper bounding it by the sum of the best set for each of these buckets.  By the pigeonhole principle, the largest expected contribution of these buckets is a $O\left(\log{m}\right)$ fraction of the contribution to the benchmark among bidders with value less than $2 \cdot \Delta$.  Consider the best of these buckets $B$.  The revenue obtained by posting the lower bound of the interval for $B$ is a $2$-approximation to the welfare contained in $B$ for each set (by construction) and, in particular, the one with maximum realized value.  Thus, posting the price equal to the lower bound of $B$ for each bidder and then selecting the set with the greatest number of accepting bidders obtains revenue which is a $O\left(\log{m}\right)$ fraction of the expected contribution from values less than or equal to $2 \Delta$.  Finally, choosing the better of the core (prices below $2\Delta$) and the tail ($p=2\Delta$) then gives a $O\left(\log m\right)$-approximation to $\Delta$.
\end{proof}

\subsection{Proof of Lemma \ref{lem:sampled}}
\begin{proof}
	Consider an arbitrary set $S$ and the threshold $\thresh{S}$ for which $|S({\thresh{S}}, \vals)| = 60\log{k}$.  Since in the sampling auction each bidder is sampled independently with probability $1/2$ we have that $\Ex{T \cap S(\thresh{S}, \vals)} = 30\log{k} = \Ex{U \cap S(\thresh{S}, \vals)}$.  Now consider the threshold $t'_S$ such that $|S(t'_S, \vals)| = 80\log{k} = 4/3\cdot|S(\thresh{S}, \vals)|$.  Similarly as above, we have that $\Ex{T \cap S(t'_S, \vals)} = \Ex{U \cap S(t'_S,\vals)} = 40\log{k}$.  By a Chernoff bound, it follows that
	
	\begin{equation}
		\prob{|T \cap S(t'_S, \vals)| > \frac{4}{3} \Ex{|T \cap S(t'_S, \vals)|}} < \left(\frac{e^{1/3}}{(4/3)^{4/3}}\right)^{40\log{k}} < 0.951^{40\log{k}} = o(1/k^{2}).
	\end{equation}
 
	Now, for all $p \in [t'_S, \thresh{S}]$ we know that $|T \cap S(p, \vals)| \leq |T \cap S(t'_S, \vals)|$ and $\Ex{|T \cap S(p, \vals)|} \geq \Ex{|T \cap S(\thresh{S}, \vals)|} = \frac{3}{4} \cdot \Ex{|T \cap S(t'_S, \vals)|}$.  Therefore, if $|T \cap S(t'_S, \vals)| < \frac{4}{3}\Ex{|T \cap S(t'_S, \vals)|}$, then $|T \cap S(p, \vals)| < \frac{16}{9}\Ex{|T \cap S(p, \vals)|}$ for all $p \in [t'_S, \thresh{S}]$.  Hence, we have
	\begin{equation}
		\prob{\exists p \in [t'_S, \thresh{S})~:~|T \cap S(p, \vals)| > \frac{16}{9} \Ex{|T \cap S(p, \vals)|}}  = o(1/k^{2}).
	\end{equation}

	Consider intervals $[r_0, r_1), [r_1, r_2), \dots, [r_{\ell - 1}, r_\ell)$ where $\Ex{|T \cap S(r_j, \vals)|} = \frac{3}{4} \cdot \Ex{|T \cap S(r_{j-1}, \vals)|}$ for all $j = 1, \dots, \ell$ and such that $r_0 = 0$ and $r_\ell = \thresh{S}$.  We can apply the same Chernoff bound argument that we did for the interval $[t'_S, \thresh{S})$ to any such interval and get a bound of $o\left(1/k^{2\cdot(4/3)^{\ell - j + 1}}\right)$ for the interval $[r_{j-1}, r_j)$.  But then, we may take a union bound over all of these intervals to obtain that
	\begin{equation*}
		\prob{\exists p \in \cup_{j = 1}^{j = \ell} [r_{j-1}, r_j)~:~|T \cap S(p, \vals)|  > \frac{16}{9}\cdot \Ex{|T \cap S(p, \vals)|}} \leq \sum_{j = 1}^{j = \ell}{o\left(\frac{1}{k^{2\cdot(4/3)^{\ell - j + 1}}}\right)} = o\left(\frac{1}{k^2}\right).
	\end{equation*}
	Consequently, $\prob{\exists p \in [0, \thresh{S})~:~|T \cap S(p, \vals)|  > \frac{16}{9}\cdot \Ex{|T \cap S(p, \vals)|}} = o(1/k^2)$. Since we have $\Ex{|T \cap S(p, \vals)|} = \frac{1}{2} \cdot |S(p, \vals)|$ we obtain $\prob{\exists p \in [0, \thresh{S})~:~|T \cap S(p, \vals)|  > \frac{8}{9}\cdot |S(p, \vals)|} = o(1/k^2)$.  We can repeat the argument replacing the sampled set of bidders $T$ with the unsampled set $U = N \setminus T$ and obtain that $\prob{\exists p \in [0, \thresh{S})~:~|U \cap S(p, \vals)|  > \frac{16}{9}\cdot \Ex{|U \cap S(p, \vals)|}} = o(1/k^2)$ and thus $\prob{\exists p \in [0, \thresh{S})~:~|U \cap S(p, \vals)|  > \frac{8}{9}\cdot |S(p, \vals)|} = o(1/k^2)$.  Finally, since $U \cap S(p,\vals) = S(p, \vals) \setminus (T \cap S(p,\vals))$, we obtain $\prob{\exists p \in [0, \thresh{S})~:~|T \cap S(p, \vals)|  < \frac{1}{9}\cdot |S(p, \vals)|} = o(1/k^2)$, completing the proof.
	\end{proof}

\subsection{Proof of Theorem \ref{th:k_sets}}
\begin{proof}
Our proof proceeds similarly as the one in Section \ref{sec:single-price-dc}, again employing our benchmark decomposition, inequality \eqref{eq:low_high_decomposition}.  Note that $\auco$ is exactly the single-price clock auction with all prices equal to zero, which, from Section \ref{sec:single-price-dc}, we know covers the $\lowcore$, $\lowtail$, and $\hightail$ portions of our benchmark.  In other words, from our analysis in the previous section, we have the following two claims:
\begin{claim} $\hightail\le \Ex[\vals]{\auco(\vals)}=\max_{S\in\maxsets}\sum_{i\in S}\Ex{\vali}$
\label{cl:high-tail}
\end{claim}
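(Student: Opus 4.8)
The plan is to show that $\hightail$ is only an $O(1/k)$ fraction of $\max_{S\in\maxsets}\sum_{i\in S}\Ex{\vali}$, which is exactly $\Ex[\vals]{\auco(\vals)}$, since $\auco$ (the single-price clock auction with all prices $0$) serves the feasible set of largest expected value at price zero. Recall from \eqref{eq:high_core_tail_decomposition} that
\[
\hightail = \sum_{S\in\maxsets}\sum_{i\in S}\Ex{\vali}\cdot\prob[\vals]{|S(t_S,\vals)-\{i\}|> 10\log k}.
\]
The main observation is that each probability factor is minuscule. Since $|S(t_S,\vals)-\{i\}|\le|S(t_S,\vals)|$, the event $\{|S(t_S,\vals)-\{i\}|> 10\log k\}$ is contained in $\{|S(t_S,\vals)| > 10\log k\}$; and by the defining property of the threshold $t_S$ we have $\Ex[\vals]{|S(t_S,\vals)|} = \log k$, so this is precisely the event that the sum of independent Bernoulli variables $\sum_{j\in S}\ind{v_j > t_S}$ exceeds ten times its mean.

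First I would bound $\prob[\vals]{|S(t_S,\vals)| > 10\log k}$ by the same Chernoff computation used in the proof of Lemma~\ref{lem:chernoff1}, now applied at the single threshold $t_S$ rather than over an interval: this yields $\prob[\vals]{|S(t_S,\vals)| > 10\log k} < (e^9/10^{10})^{\log k} = o(1/k^2)$. Hence every term in the definition of $\hightail$ carries a probability factor of $o(1/k^2)$, and summing gives
\[
\hightail \;\le\; o(1/k^2)\cdot\sum_{S\in\maxsets}\sum_{i\in S}\Ex{\vali} \;\le\; o(1/k^2)\cdot k\cdot\max_{S\in\maxsets}\sum_{i\in S}\Ex{\vali} \;=\; o(1/k)\cdot\Ex[\vals]{\auco(\vals)},
\]
where the second inequality uses that $\maxsets$ contains $k$ maximal sets, each inner sum being at most the maximum, and the last equality uses $\Ex[\vals]{\auco(\vals)}=\max_{S\in\maxsets}\sum_{i\in S}\Ex{\vali}$. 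Since the explicit Chernoff constant makes $k$ times this tail probability at most $1$ for every $k$, we conclude $\hightail\le\Ex[\vals]{\auco(\vals)}$.

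This argument is essentially a repackaging of the union bounds already carried out for the $\lowtail$ and $\hightail$ terms in Section~\ref{sec:single-price-dc}, so I do not expect a genuine obstacle. The only points needing a little care are (i) applying the tail bound at the boundary threshold $x = t_S$ — Lemma~\ref{lem:chernoff1} as stated covers only the half-open interval $[0,t_S)$ — which is handled by invoking the Chernoff bound directly, and (ii) checking that the tail probability is small enough (via the constant $e^9/10^{10}$) that the factor $k$ lost in passing from $\sum_{S\in\maxsets}$ to $k\cdot\max_{S\in\maxsets}$ does not overwhelm it, which holds for all $k$.
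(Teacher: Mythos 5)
Your proposal is correct and follows essentially the same route as the paper: the paper also bounds each probability factor $\prob[\vals]{|S(t_S,\vals)-\{i\}|> 10\log k}$ by $o(1/k^2)$ via the Chernoff/Lemma~\ref{lem:chernoff1} argument and then compares $\hightail$ against $\max_{S\in\maxsets}\sum_{i\in S}\Ex{\vali}\ge\frac{1}{k}\sum_{S\in\maxsets}\sum_{i\in S}\Ex{\vali}$, which is exactly $\Ex[\vals]{\auco(\vals)}$. Your explicit handling of the endpoint $x=t_S$ and of the constant in the Chernoff bound is a slightly more careful write-up of the same argument.
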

and
\begin{claim}
$\mathsf{LOW}\le 12\cdot\Ex[\vals]{\auco(\vals)}=12\cdot \max_{S\in\maxsets}\sum_{i\in S}\Ex{\vali}$.
\label{cl:low}
\end{claim}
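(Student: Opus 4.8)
The plan is to derive the claimed bound $\mathsf{LOW}\le 12\,\Ex[\vals]{\auco(\vals)}$ directly from the $\mathsf{LOW}$ analysis already carried out in Section~\ref{sec:single-price-dc}. I would start from the core--tail split of inequality~\eqref{eq:low_core_tail_decomposition}, $\mathsf{LOW}\le\lowcore+\lowtail$, and bound each of the two terms by a constant multiple of $\Ex[\vals]{\auco(\vals)}$. Throughout I use that the zero-price single-price clock auction $\auco$ serves the maximal feasible set of largest expected value, so $\Ex[\vals]{\auco(\vals)}=\max_{S\in\maxsets}\sum_{i\in S}\Ex{\vali}$.

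First I would handle the tail term $\lowtail=\sum_{S\in\maxsets}\sum_{i\in S}\Ex{\vali}\cdot\prob[\vals]{\exists x\in[0,t_S)~:~|S(x,\vals)-\{i\}|>10\,\Ex[\vals]{|S(x,\vals)|}}$. Removing a single bidder only shrinks the count $|S(x,\vals)|$, so each of these probabilities is at most $\prob[\vals]{\exists x\in[0,t_S)~:~|S(x,\vals)|>10\,\Ex[\vals]{|S(x,\vals)|}}$, which Lemma~\ref{lem:chernoff1} bounds by $o(1/k^2)$. Summing over the $k$ maximal sets and over their bidders, and using $\sum_{S\in\maxsets}\sum_{i\in S}\Ex{\vali}\le k\cdot\max_{S\in\maxsets}\sum_{i\in S}\Ex{\vali}=k\cdot\Ex[\vals]{\auco(\vals)}$, gives $\lowtail\le o(1/k)\cdot\Ex[\vals]{\auco(\vals)}\le\Ex[\vals]{\auco(\vals)}$.

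For the core term $\lowcore$ I would use the layer-cake identity $\sum_{i\in S}\min(\vali,t_S)=\int_0^{t_S}|S(x,\vals)|\dd x$. On the event that multiplies the $S$-th summand in the definition of $\lowcore$ we have $|S(x,\vals)|\le 10\,\Ex[\vals]{|S(x,\vals)|}+1$ for every $x\in[0,t_S)$; since $\Ex[\vals]{|S(x,\vals)|}\ge\Ex[\vals]{|S(t_S,\vals)|}=\log k\ge 1$ on this range, the integrand is pointwise at most $11\,\Ex[\vals]{|S(x,\vals)|}$. Taking expectations, pulling out the now-deterministic maximum over $S$, and using $\Ex[\vals]{\int_0^{t_S}|S(x,\vals)|\dd x}=\Ex[\vals]{\sum_{i\in S}\min(\vali,t_S)}\le\sum_{i\in S}\Ex{\vali}$, I obtain $\lowcore\le 11\,\Ex[\vals]{\auco(\vals)}$. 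Adding the two bounds yields $\mathsf{LOW}\le\lowcore+\lowtail\le 12\,\Ex[\vals]{\auco(\vals)}$. I expect no genuine obstacle: all of this machinery is already present in Section~\ref{sec:single-price-dc}, and the only points needing care are that trimming one bidder keeps Lemma~\ref{lem:chernoff1} applicable to $\lowtail$, and that the extra $+1$ in the core bound is absorbed via $\Ex[\vals]{|S(x,\vals)|}\ge\log k$ (for the finitely many small $k$ with $\log k<1$, the target is an absolute constant and can be verified directly).
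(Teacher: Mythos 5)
Your proof is correct and follows essentially the same route as the paper's: you use the core--tail decomposition~\eqref{eq:low_core_tail_decomposition}, bound $\lowtail$ via Lemma~\ref{lem:chernoff1} and the crude observation that $\sum_{S}\sum_{i\in S}\Ex{\vali}\le k\cdot\max_S\sum_{i\in S}\Ex{\vali}$, and bound $\lowcore$ via the layer-cake identity and the pointwise bound $10\,\Ex[\vals]{|S(x,\vals)|}+1\le 11\,\Ex[\vals]{|S(x,\vals)|}$ for $x\in[0,t_S)$, giving $11+1=12$. The only cosmetic improvement over the paper's write-up is that you correctly record the last step of the $\lowcore$ bound as $\Ex[\vals]{\int_0^{t_S}|S(x,\vals)|\dd x}=\Ex[\vals]{\sum_{i\in S}\min(\vali,t_S)}\le\sum_{i\in S}\Ex{\vali}$ (an inequality, not an equality as the paper's display briefly suggests), and you explicitly flag the harmless small-$k$ corner case of the $+1$ absorption.
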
 

Inequalities \eqref{eq:low_high_decomposition} and \eqref{eq:high_core_tail_decomposition} give us an upper bound $\opt\le \mathsf{LOW}+ \hightail+\highcore$, where the welfare of $\auco$ covers the terms $\mathsf{LOW}$ and $\hightail$ by claims~\ref{cl:high-tail} and~\ref{cl:low}: $13\cdot\Ex[\vals]{\auco(\vals)}\ge\mathsf{LOW}+ \hightail$. Thus

\[
\opt - 13\cdot \Ex[\vals]{\auco(\vals)} \le \highcore \le \opt.
\]

We can now assume that $\highcore$ is a $2$ approximation to $\opt$, as otherwise $\auco$ would be a $26 < \log{\log{k}}$ approximation to $\opt$. Thus it suffices to show that $\uprice$ gives a $O(\log\log k)$ approximation to $\highcore$. 
From the proof of Lemma \ref{lem:singlefixedprice} we obtain Corollary \ref{lem:fixedprice} which shows that there is a single price $p$ such that either our {\em revenue} will be a $O(\log\log k)$ approximation to $\highcore$ when we uniformly post $p$ for each bidder $i\in N$ and serve a maximal feasible set of accepting bidders, or the price $p$ is so high that by just serving 
a single bidder we can get the expected {\em welfare} 
to be a $O(\log\log k)$ approximation to $\highcore$.

Note that 
$\price\cdot\Ex{\max_{S\in\maxsets}\sum_{i\in S}\ind{\vali\ge\price}}$
is precisely the expected revenue
we get in a clock auction with a uniform price $p$.
Corollary~\ref{lem:fixedprice} shows that the best choice of a single uniform price 
will give an $O(\log m)$-approximation to $\Delta$, which translates to $O(\log\log k)$-approximation to $\highcore$ with $m = 10 \log{k} + 1$. The proof of this corollary directly follows the proof of Lemma \ref{lem:singlefixedprice}.

Note that Corollary~\ref{lem:fixedprice} proves the existence of a uniform price $p$ that would give us the desired approximation of $O(\log\log k)$. Since we cannot calculate this price in advance, $\uprice$ gradually sweeps through a full range of prices, aiming to find one that yields the desired revenue. However, this could fail if $\uprice$ reaches a lower price that also satisfies the revenue target and stops too soon, selecting a set with low social welfare.
We get around this problem by more careful analysis of the expected welfare of $\uprice$. 
We choose revenue goal $g=\frac{\opt}{4\alpha}$ in $\uprice$ based on the approximation guarantee $\alpha$ in Corollary~\ref{lem:fixedprice}. 
Recall that  
\begin{equation} 
\label{eq:def_g}
g=\frac{\opt}{4\alpha}\le \frac{\highcore}{2\alpha}.
\end{equation}

We consider three cases for $\uprice$. Let $T(\vals)$ be a feasible set of maximum size among the bidders who accept the price $p$ from Corollary~\ref{lem:fixedprice}.
\paragraph{Case 1: $\Ex[\vals]{|T(\vals)|}\ge 8$, i.e., the expected size of the set $T$ is large.} 
Note that 
\[|T(\vals)|=\max_{S\in\maxsets}\sum_{i\in S}X_i, ~\text{where}~ X_i\eqdef\ind{\vali\ge p}~\text{are independent Bernoulli r.v. for } i\in N.
\]
Thus, one can think of $|T(\vals)|$ as an XOS (maximum of additive) function $f(X_1,\ldots,X_n)$ with marginal contribution of each coordinate $i$ in $[0,1]$. According to~\cite{Vondrak10}, such functions enjoy strong concentration with Chernoff-type bounds. In particular, Corollary~$3.2$ in~\cite{Vondrak10} gives us 
\[
\prob{|T|\le\Ex{|T|}/2}\le e^{-\Ex{|T|}/8}\le e^{-1}.
\] 
Let us assume that $|T(\vals)|\ge\Ex{|T|}/2$.
According to Corollary~\ref{lem:fixedprice} and \eqref{eq:def_g} we have
 $2g\le \frac{\highcore}{\alpha}\le\price\cdot\Ex{|T|}$, i.e., $g\le\price\cdot|T(\vals)|$. In this case, $\uprice$ obtains welfare of at least $g$. Indeed, the auction must stop at price $p$ or earlier, as the revenue goal of $g$ would have been attained at price $p$. Note that the auction can stop only when the revenue goal is reached, or all active bidders can be served. In the former case, the revenue of $\uprice$ is at least $g$. In the latter case, all the bidders in $T(\vals)$ will be served before reaching the price $p$, i.e., everyone in $T(\vals)$ with  the welfare of at least $g$. Therefore, the expected social welfare of $\uprice$ is at least
 \[
\prob{|T|\ge\Ex{|T|}/2}\cdot g \ge (1-e^{-1})g =\frac{\opt}{O(\log\log k)}. 
 \]

\paragraph{Case 2: $\prob[\vals]{\uprice \text{ serves set } F: \sum_{i\in F}\vali\ge g}\ge \frac{1}{2}$.} Then the expected welfare of $\uprice$ is at least  $g/2 = \frac{\opt}{O(\log\log k)}$.
 
\paragraph{Case 3: $\Ex[\vals]{|T(\vals)|}< 8$ and $\prob[\vals]{\uprice \text{ serves set } F: \sum_{i\in F}\vali\ge g}< \frac{1}{2}$.} 
Thus, we stop at set $F$ with $\sum_{i\in F}\vali< g$ with probability more than $1/2$. When $\sum_{i\in F}\vali< g$, $\uprice$ must be serving the whole set $A$ of active bidders. In particular, it means that a bidder $j\in\argmax_{i\in N}\vali$ must be served in $A$, as we can assume without loss of generality that each bidder in $N$ can be served as a singleton. We show in the following Claim~\ref{cl:u-price_expected_max} that the expected welfare of $\uprice$ must be at least half of the {\em expected maximum}.\footnote{Despite the fact that we serve the maximum bidder with probability more than $1/2$, we still need to show that $\uprice$ gets sufficiently high expected welfare.} 

\begin{claim} 
\label{cl:u-price_expected_max}
Assume $\prob[\vals]{\uprice \text{ serves set } F: \sum_{i\in F}\vali\ge g}< \frac{1}{2}$. Then the expected social welfare of $\uprice$ is at least $\frac{1}{2}\Ex[\vals]{\max_i\vali}.$
\end{claim}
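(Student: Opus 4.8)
The plan is to reduce the claim to a statement about a single bidder, and then exploit an independence structure of $\uprice$. Write $W$ for the social welfare of $\uprice$, put $M=\max_{i\in N}\vali$, and fix $j\in\argmax_i\vali$, using a consistent tie-break (biased toward keeping the highest active bidder) both here and in the ``accept the largest feasible set'' step of $\uprice$; under continuous distributions these ties form a null event. Since $\{j\}$ is feasible without loss of generality, whenever $\uprice$ halts because its active set became feasible it serves that whole set, which still contains $j$ (the highest bidder drops out last), so $W\ge M$; whenever it halts because some feasible $F$ met the revenue target, $\sum_{i\in F}\vali\ge|F|\cdot p\ge g$, so $W\ge g$. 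In particular, on $\{W<g\}$ the auction serves $j$ (this is the ``serves $j$ with probability more than $1/2$'' fact, using the Case~3 hypothesis $\Prx{W\ge g}<1/2$), and moreover $M<g$ there. Since ``$\uprice$ serves $j$'' forces $W\ge M$, we have $\Ex{\uprice}\ge\Ex{M\cdot\ind{\uprice\text{ serves }j}}$, so it suffices to prove $\Ex{M\cdot\ind{\uprice\text{ serves }j}}\ge\frac{1}{2}\Ex{M}$, equivalently $\Ex{M\cdot\ind{\uprice\text{ misses }j}}\le\frac{1}{2}\Ex{M}$ --- and this is exactly the gap the footnote warns about, since the probability bound alone does not give it.

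The device I would use is the observation that, on the event that a fixed bidder $i$ is the strict top bidder, the entire run of $\uprice$ --- and in particular whether it serves $i$ --- is a deterministic function of $\vals_{-i}$ and does not depend on the exact value $\vali$: at clock prices below the runner-up value the active set is $\{i\}$ together with a set determined by $\vals_{-i}$, and once the price passes the runner-up, $i$ is the unique active bidder, whose singleton is feasible, so $\uprice$ has already terminated. Splitting $\Ex{M\cdot\ind{\uprice\text{ misses }j}}$ as $\sum_i\Ex{\vali\cdot\ind{i\text{ is the strict top bidder}}\cdot\ind{\uprice\text{ misses }i}}$ and conditioning on $\vals_{-i}$, the second indicator becomes a deterministic function of $\vals_{-i}$, so each summand equals $\Ex[\vals_{-i}]{\ind{\uprice\text{ misses }i}\cdot\Ex{\vali\cdot\ind{\vali>\max_{i'\ne i}\vali[i']}\mid\vals_{-i}}}$; the analogous decomposition of $\Ex{M}=\sum_i\Ex{\vali\cdot\ind{i\text{ strict top}}}$ differs only in that this first indicator is dropped. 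So the claim comes down to showing that for each $i$, the $\vals_{-i}$-weighted mass $\Ex{\vali\ind{\vali>\max_{i'\ne i}\vali[i']}\mid\vals_{-i}}$ placed on realizations where the (value-independent) run of $\uprice$ with $i$ on top misses $i$ is at most the mass placed on realizations where it serves $i$.

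The hard part is this last comparison, and it is where the Case~3 hypotheses must be used. If $i$ is missed, then $\uprice$ halted on the revenue target at some price $p^\ast$ serving a feasible $F$ with $|F|\ge2$ (a missed $i$ with a size-$1$ tie is ruled out by the tie-break) and $|F|\,p^\ast\ge g$, while $i$ was active and $F\cup\{i\}$ was infeasible. I would argue that, because $g=\opt/(4\alpha)$ and, by the Case~3 bound $\Ex{|T(\vals)|}<8$ together with Corollary~\ref{lem:fixedprice}, the relevant range of prices at which such a ``missed'' configuration can arise is pinned to scale $\Theta(g)$, these configurations are rare enough (their total probability is at most $\Prx{W\ge g}<\tfrac12$) and carry a conditional top-value mass no larger than that of the corresponding ``served'' realizations; summing over $i$ then gives $\Ex{M\cdot\ind{\uprice\text{ misses }j}}\le\frac12\Ex{M}$ and hence $\Ex{\uprice}\ge\Ex{M\cdot\ind{\uprice\text{ serves }j}}\ge\frac12\Ex{M}$. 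I expect reconciling the revenue-target dynamics with the random value of the top bidder to be the only genuinely delicate step; everything above it is structural bookkeeping on top of the dichotomy and the value-independence observation.
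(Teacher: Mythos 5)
Your reductions are all correct: the dichotomy between the two termination modes (revenue target hit $\Rightarrow W\ge g$; active set becomes feasible $\Rightarrow$ the top bidder, who drops out last, is served), the observation that on the event $\{i\ \text{strict top}\}$ the indicator $\ind{\uprice\text{ serves }i}$ is a deterministic function of $\vals_{-i}$ alone, and the equivalent target $\Ex{M\cdot\ind{\uprice\text{ misses }\argmax}}\le\tfrac12\Ex{M}$ are all the right skeleton and also appear (implicitly) in the paper. But you stop exactly where the proof has to actually earn something, and the route you gesture at is not the right one.

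The concrete gap: you never establish the mass comparison ``$\phi_i$-weighted mass of $\psi_i$ is at most $\tfrac12$ of the total'' for each $i$. More importantly, the auxiliary facts you propose to invoke --- $\Ex{|T(\vals)|}<8$, $g=\opt/(4\alpha)$, Corollary~\ref{lem:fixedprice}, prices ``pinned to $\Theta(g)$'' --- are \emph{not} hypotheses of this claim. The claim assumes only $\prob{W\ge g}<\tfrac12$, and the paper's proof uses nothing else; the case-3 bound $\Ex{|T(\vals)|}<8$ is applied \emph{after} this claim to relate $\Ex{M}$ to $\highcore$, not inside its proof. Nor is there any apparent reason the terminating price of a ``missed'' run should concentrate at scale $\Theta(g)$: a missed run only needs some feasible $F$ not containing $i$ with $|F|\cdot p^\ast\ge g$, and $|F|$ can be arbitrarily large, so $p^\ast$ can be arbitrarily small.

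The idea that actually closes the gap is a stochastic-domination coupling, and it is cleaner than a per-$i$ mass comparison because it gives a bound that is uniform in the pair $(i,t)$ where $i$ is the argmax and $t=v_i$. Condition on that pair. Each remaining coordinate $v_j$ ($j\ne i$) is then distributed as $D_j$ truncated to $(-\infty,t)$, which is coordinate-wise stochastically dominated by the untruncated $D_j$; couple $\pvalsmi\le\walsmi$ pointwise with $\walsmi\sim\distsmi$. The event ``$\uprice$ misses $i$ on $(\pvalsmi,t)$'' forces, at the terminating price $q^\ast$, a feasible $F^\ast\subseteq\{j\ne i:\pvali[j]\ge q^\ast\}$ with $|F^\ast|\,q^\ast\ge g$; that event is monotone increasing in $\vals_{-i}$, so under the coupling it is contained in the event that the auction on the \emph{unconditioned} profile $(\walsmi,\wali)$ hits its revenue target --- i.e.\ $W(\walsmi,\wali)\ge g$. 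Hence $\prob{\uprice\text{ misses }i\mid i=\argmax,\ v_i=t}\le\prob{W\ge g}<\tfrac12$ for every $(i,t)$, and integrating $t\cdot\prob{i=\argmax,\ v_i\in dt}$ against this uniform bound gives $\Ex{M\cdot\ind{\text{misses}}}<\tfrac12\Ex{M}$, which is exactly the inequality you were after, with nothing further needed. This coupling is the ingredient missing from your proposal.
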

\begin{proof}
Sample $\vals\sim\vect{\dist}$ and fix the bidder $i=\argmax_j \vali[j]$ with the maximum value $\vali=t$.
Assuming that $i$ is the bidder with maximum value $t$, it suffices to show that $\uprice$ serves $i$ with probability at least  $1/2$ for any $i\in N$ and $t\ge 0$. To this end, consider independently sampling valuation profile $\pvalsmi\sim\pdistsmi: \forall j\ne i~ \pvali[j]\sim \disti[j] \mid \pvali[j] < t$. Note that (i) $\pvalsmi$ has exactly the same distribution as $\valsmi$ conditioned on $\vali=t$ being the maximum in $\vals$ and (ii) $\pdistsmi$ is stochastically dominated by $\walsmi\sim\distsmi=\times_{j\ne i} \disti[j]$, i.e., we can do probability coupling  between $\pvalsmi\sim\pdistsmi$ and $\walsmi\sim\distsmi$, so that each $\pvalsmi$ is coordinate-wise smaller than $\walsmi$ ($\forall j\ne i,~\pvali[j]\le \wali[j]$).

By the Claim's assumption the social welfare of $\uprice$ on a random valuation profile $(\walsmi,\wali)\sim\vect{\dist}$ is less than $g$ with probability more than $1/2$. Now, if this welfare is less than $g$, then the revenue of offering a uniform price $q$ must  be smaller than $g$ for any price $q\ge 0$, i.e.,
\begin{equation}
\label{eq:must_serve_i}
\forall q\ge 0,~~  q\cdot|\{j\in N: \wali[j]\ge q\}| <g \quad\implies\quad \forall q\ge 0,~~  q\cdot|\{j\ne i: \pvali[j]\ge q\}| <g.
\end{equation}
Therefore, $\uprice$ must serve bidder $i$ on the valuation profile $(\pvalsmi, t)$, as (a) $i$ has the maximum bid and would always stay active; (b) we cannot reach the revenue goal of $g$ with a uniform price $q$ without using bidder $i$ according to \eqref{eq:must_serve_i}. This concludes the proof, as the probability of having 
a $(\walsmi,\wali)\sim\vect{\dist}$ such that the welfare of  $\uprice(\walsmi,\wali)<g$ is greater than half and  we will serve $i$ on each corresponding profile $(\pvalsmi,t)$.
\end{proof}
Now we can conclude the analysis of case 3. Claim~\ref{cl:u-price_expected_max} gives a lower bound of $\frac{1}{2}\Ex[\vals]{\max_i \vali}$ on the  expected welfare of $\uprice$. When $\Ex{|T|}<8$, the expected maximum $\Ex[\vals]{\max_i \vali}$ is a constant approximation to the revenue $p\cdot\Ex{|T|}$ of $\uprice$. 
On the other hand,  $\frac{1}{2}\Ex[\vals]{\max_i \vali}$ is also a constant approximation to the other term  $\Ex{\max_{i\in N}\Ex{\vali ~\big|~ \vali\ge\price}\cdot\ind{\vali\ge\price}}$ in Lemma~\ref{lem:fixedprice}.
\end{proof}

\subsection{Single-price clock auctions for the distributions with constant support sizes}
\label{sec:proof_types}
Let $\feasible$ be any downward-closed family of feasible sets over the set of $N$ bidders. We assume that all possible values $\vali$ of each bidder $i\in N$ can be described by a list of size at most $\ell$, i.e., the support $\supp(\disti)$ of each distribution $\disti$ has size  at most $\ell$.
\constantSupport*
\begin{proof}
Let us write all possible types in each distribution $\disti$ in the increasing order $(\theta_{i,1},\ldots, \theta_{i,\ell})$. Consider the following $\ell$ single-price clock auctions: for a given $j\in \{1,2,\ldots,\ell\}$ 
\begin{enumerate}
		\item set price $\pricei=\theta_{i,j}$ for each $i\in N$.
		\item among the bidders who accepted their price select $S\subset\{i:\vali\ge\theta_{i,j}\}$ with maximal revenue 
		\[\max_{S\in\feasible}\sum_{i\in S}\theta_{i,j}.\]
\end{enumerate}
Our single-price clock auction is the one among these $\ell$ auctions with the highest expected revenue\footnote{Note that we can write the revenue of the auction in this way because $\feasible$ is downward-closed set system.}, which can be written as 
\[
\max_{j\in[\ell]}\Ex[\vals]{\max_{S\in\feasible}\sum_{i\in S}\theta_{i,j}\cdot\ind{\vali\ge\theta_{i,j}}}.
\] 
We have the following upper bound on the benchmark.
\begin{multline}
\label{eq:types}
\opt=\Ex[\vals]{\max_{S\in\feasible}\sum_{i\in S}\vali}
=\Ex[\vals]{\max_{S\in\feasible}\sum_{i\in S}\vali\sum_{j=1}^{\ell}\ind{\vali=\theta_{i,j}}}
\le
\Ex[\vals]{\sum_{j=1}^{\ell}\max_{S\in\feasible}\sum_{i\in S}\vali\cdot\ind{\vali=\theta_{i,j}}}\\
=\sum_{j=1}^{\ell}\Ex[\vals]{\max_{S\in\feasible}\sum_{i\in S}\theta_{i,j}\cdot\ind{\vali=\theta_{i,j}}}
\le\ell\cdot \max_{j\in[\ell]}\Ex[\vals]{\max_{S\in\feasible}\sum_{i\in S}\theta_{i,j}\cdot\ind{\vali=\theta_{i,j}}},
\end{multline}
where to get the first inequality we used that for each valuation profile $\vals$ the maximal sum $\sum_{i\in S}\vali$ for $S\in\feasible$can be covered by $\ell$ sums $\sum_{i\in S} \vali\cdot\ind{\vali=\theta_{i,j}}$. To conclude the proof we notice that for each valuation profile $\vals$ and feasible $S\in\feasible$ \[\sum_{i\in S}\theta_{i,j}\cdot\ind{\vali=\theta_{i,j}}\le \sum_{i\in S}\theta_{i,j}\cdot\ind{\vali\ge\theta_{i,j}},\]
i.e., the last term in \eqref{eq:types} is not more than $\ell$ times the revenue of our single-price clock auction, which is not more than the welfare. 
\end{proof}
\newpage
\bibliographystyle{plainnat}
\bibliography{bibliography}
\end{document}